\documentclass[default,iicol]{sn-jnl}


\usepackage{verbatim,bm}


\jyear{2023}%

\theoremstyle{thmstyleone}%
%

\theoremstyle{thmstyletwo}%

\theoremstyle{thmstylethree}%

\theoremstyle{definition}
\newtheorem{dfn}{Definition}[section]
\newtheorem{pro}[dfn]{Problem}
\newtheorem{thm}[dfn]{Theorem}

\newtheorem{cor}[dfn]{Corollary}

\newtheorem{exa}[dfn]{Example}

\renewcommand{\R}{\mathbb R}
\renewcommand{\Z}{\mathbb Z}

\newcommand{\La}{\Lambda}
\newcommand{\ti}{\tilde}

\newcommand{\vol}{\mathrm{Vol}}
\newcommand{\bs}{\hfill $\blacksquare$}

\newcommand{\vl}{\,:\,}

\raggedbottom

\begin{document}

\title[Density functions of periodic sequences]{Density functions of periodic sequences of continuous events}


\author[1]{\fnm{Olga} \sur{Anosova}}\email{oanosova@liverpool.ac.uk}

\author*[1]{\fnm{Vitaliy} \sur{Kurlin}}\email{vitaliy.kurlin@gmail.com}

\affil*[1]{\orgdiv{Computer Science}, \orgname{University of Liverpool}, \orgaddress{\street{Ashton street}, \city{Liverpool}, \postcode{L69 3BX}, \country{UK}}}


\abstract{
Periodic Geometry studies isometry invariants of periodic point sets that are also continuous under perturbations.
The motivations come from periodic crystals whose structures are determined in a rigid form but any minimal cells can discontinuously change due to small noise in measurements. 
For any integer $k\geq 0$, the density function of a periodic set $S$ was previously defined as the fractional volume of all $k$-fold intersections (within a minimal cell) of balls that have a variable radius $t$ and centers at all points of $S$. 
This paper introduces the density functions for periodic sets of points with different initial radii motivated by atomic radii of chemical elements and by continuous events occupying disjoint intervals in time series.
The contributions are explicit descriptions of the densities for periodic sequences of intervals. 
The new densities are strictly stronger and distinguish periodic sequences that have identical densities in the case of zero radii.
}

\keywords{computational geometry, periodic set, periodic time series, isometry invariant, density function}

\pacs[MSC Classification]{68U05, 51K05, 51N20, 51F30, 51F20}

\maketitle

\section{Motivations for the density functions of periodic sets}

This work substantially extends the previous conference paper \cite{anosova2022density} in Discrete Geometry and Mathematical Morphology 2022.
The past work explicitly described the density functions for periodic sequences of zero-sized points.
The new work extends these analytic descriptions to periodic sequences whose points have non-negative radii.
\medskip

The proposed extension to the weighted case is motivated by crystallography and materials chemistry \cite{anosova2021introduction} because all chemical elements have different atomic radii.
In dimension 1, the key motivation is the study of periodic time series consisting of continuous and sequential (non-overlapping) events represented by disjoint intervals.
Any such interval $[a,b]\subset\R$ for $a\leq b$ is the one-dimensional ball with the center $\dfrac{a+b}{2}$ and radius $\dfrac{b-a}{2}$.   
\medskip

The point-set representation of periodic crystals is the most fundamental mathematical model for crystalline materials because nuclei of atoms are well-defined physical objects, while chemical bonds are not real sticks or strings but abstractly represent inter-atomic interactions depending on many thresholds for distances and angles.
\medskip

Since crystal structures are determined in a rigid form, their most practical equivalence is \emph{rigid motion} (a composition of translations and rotations) or \emph{isometry} that maintains all inter-point distances and includes also mirror reflections \cite{widdowson2022average}.
\medskip

Now we introduce the key concepts. 
Let $\R^n$ be Euclidean space, $\Z$ be the set of all integers.

\newcommand{\cheight}{32mm}
\newcommand{\bwidth}{66mm}
\newcommand{\dwidth}{92mm}
\begin{figure*}[h!]
\parbox{\bwidth}{
  \includegraphics[height=\cheight]{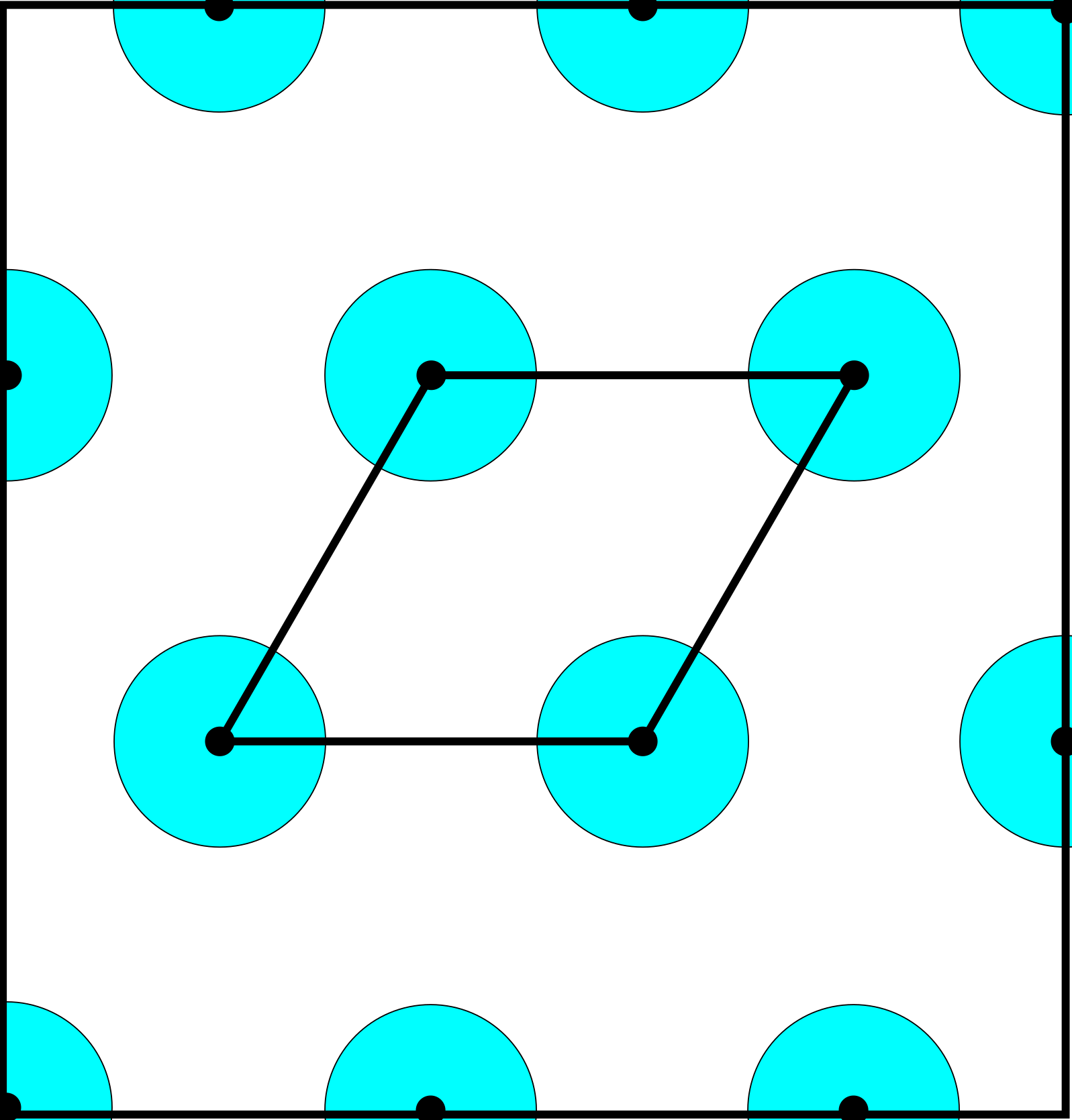}
  \hspace*{1mm}
  \includegraphics[height=\cheight]{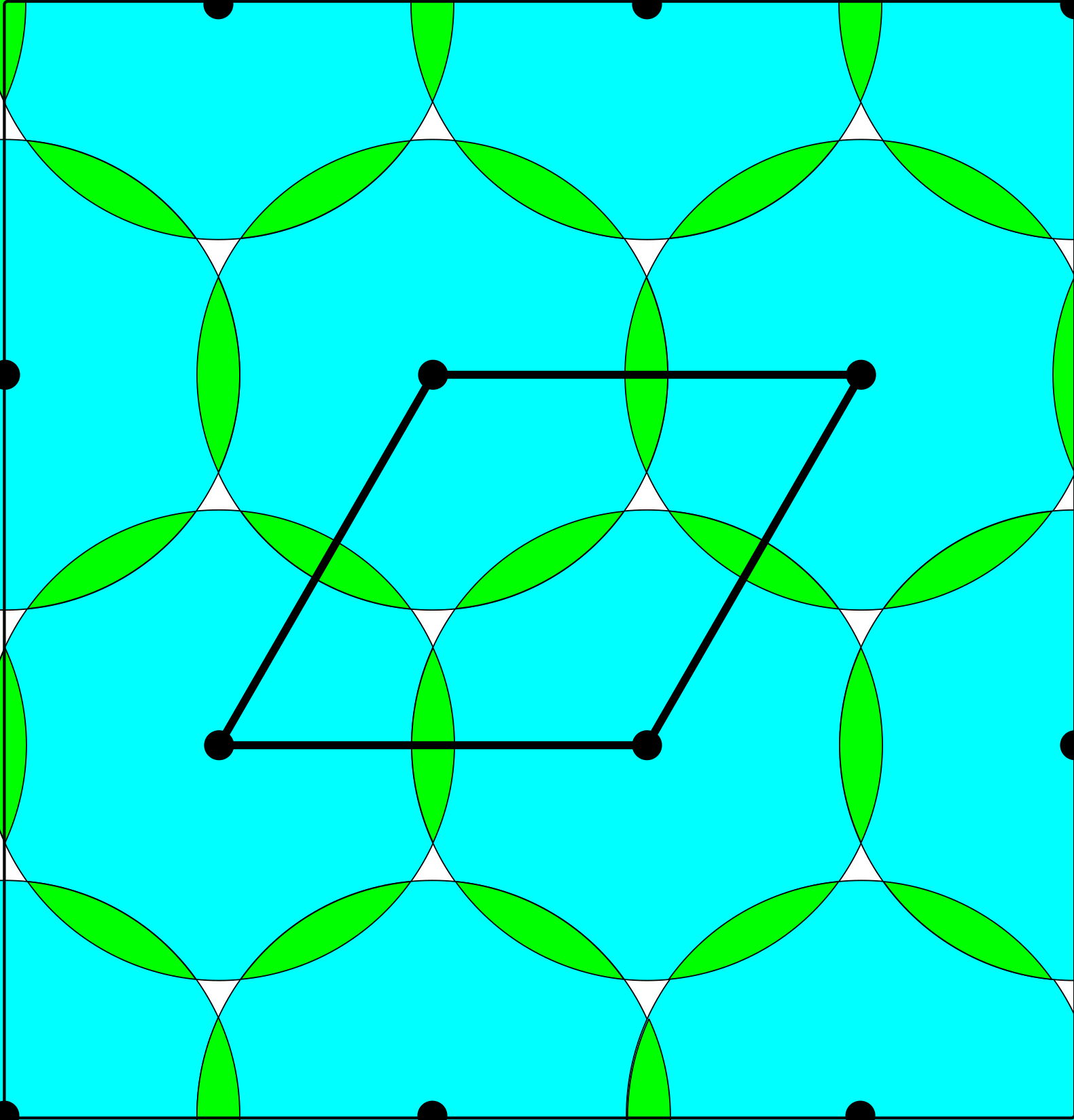}
  \medskip
  
  \includegraphics[height=\cheight]{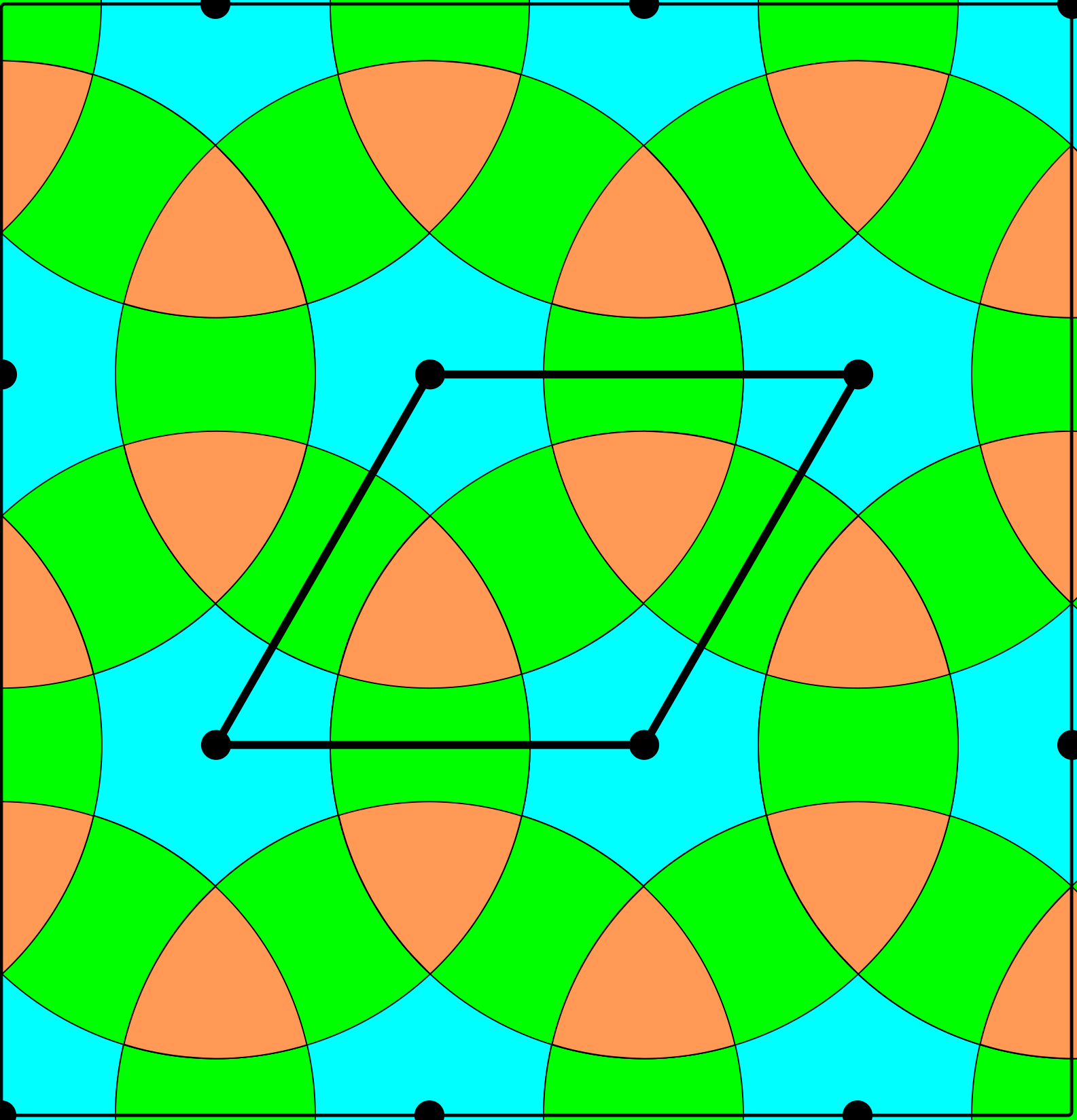}
  \hspace*{1mm}
  \includegraphics[height=\cheight]{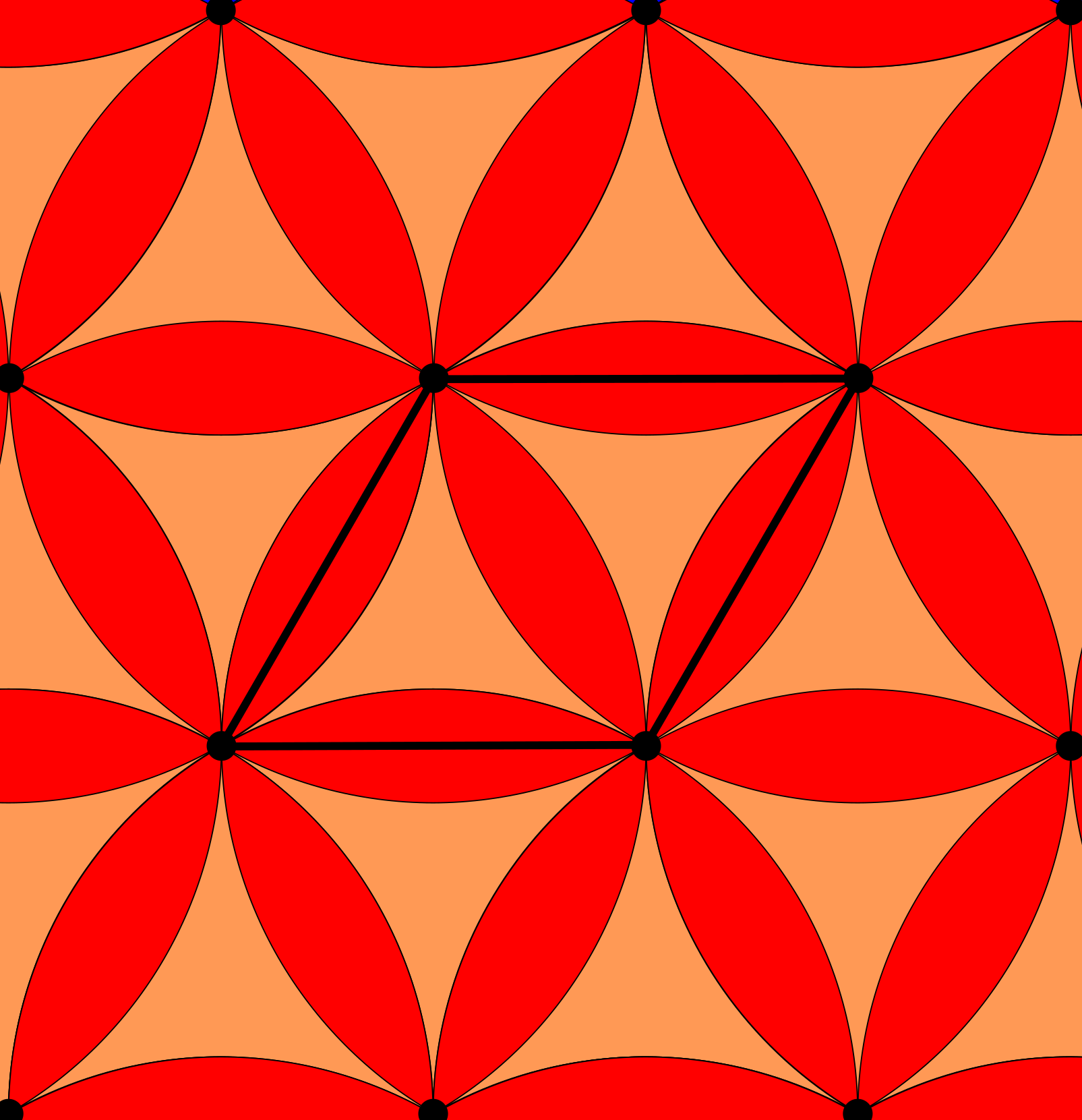}}
  \parbox{80mm}{
  \includegraphics[width=\dwidth]{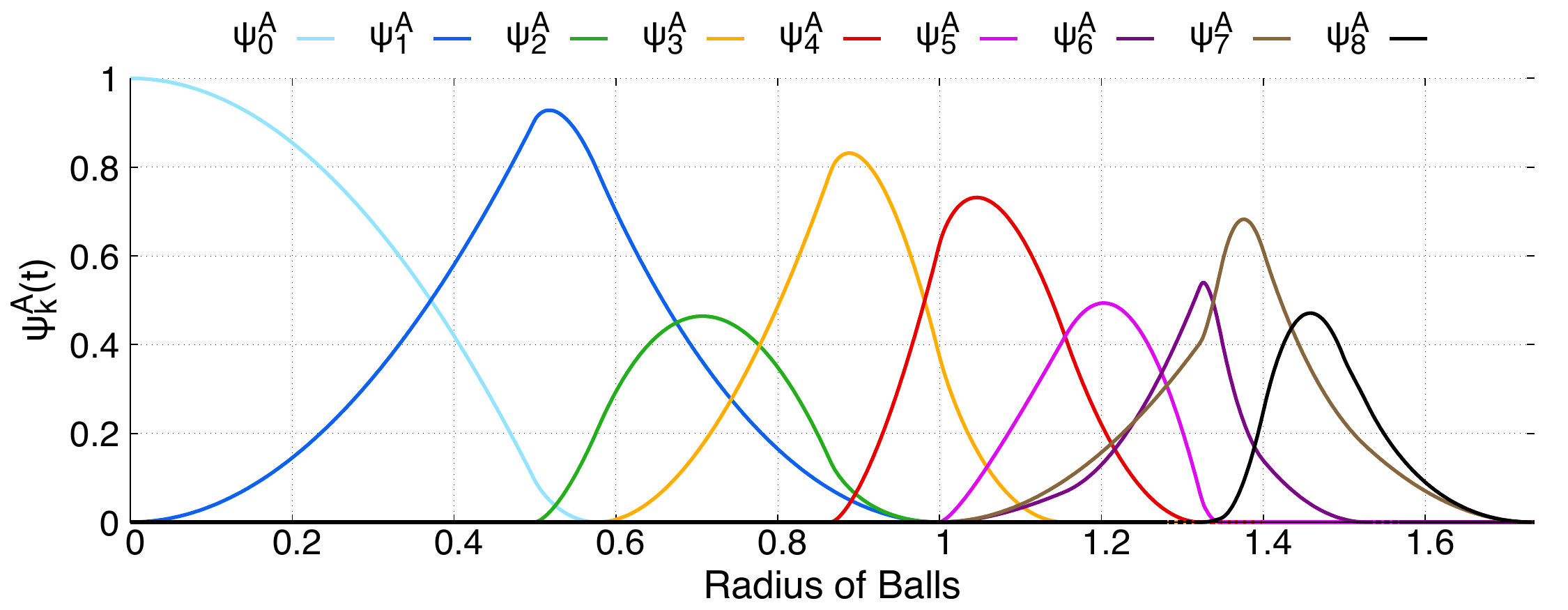}
  \includegraphics[width=\dwidth]{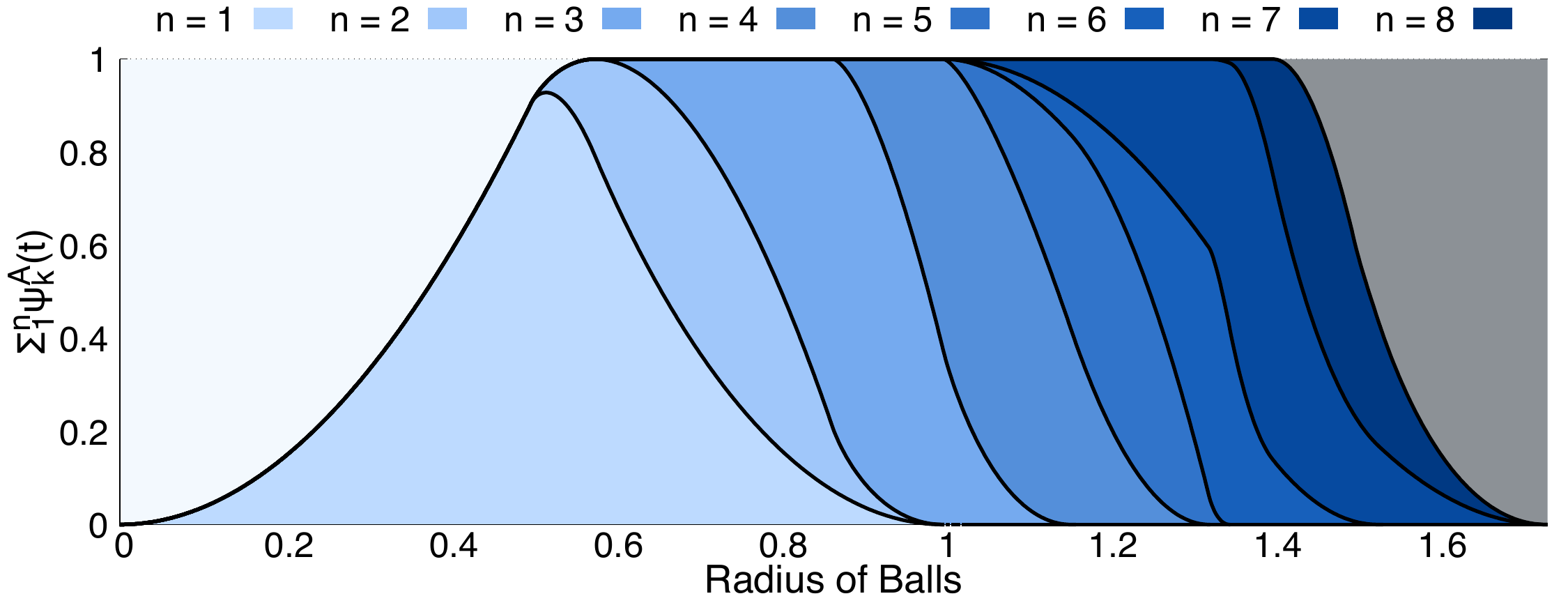}}
  \caption{Illustration of Definition~\ref{dfn:densities} for the hexagonal lattice.
  \textbf{Left}: subregions $U_k(t)$ are covered by $k$ disks for the radii $t=0.25, 0.55, 0.75, 1$.
  \textbf{Right}: the densities $\psi_k$ are above the corresponding \emph{densigram} of 
  accumulated functions 
  $\sum\limits_{i=1}^k\psi_i(t)$.}
  \label{fig:densities_sq}
\end{figure*}

\begin{dfn}[a \emph{lattice} $\La$, a \emph{unit cell} $U$, a \emph{motif} $M$, a \emph{periodic point set} $S=M+\La$]
\label{dfn:crystal}
For any linear basis $ v_1,\dots, v_n$ of $\R^n$, a {\em lattice} is $\La=\{\sum\limits_{i=1}^n c_i v_i : c_i\in\Z\}$.
The \emph{unit cell} $U( v_1,\dots, v_n)=\left\{ \sum\limits_{i=1}^n c_i v_i \vl c_i\in[0,1) \right\}$ is the parallelepiped spanned by the basis above.
A \emph{motif} $M\subset U$ is any finite set of points $p_1,\dots,p_m\in U$.
A \emph{periodic point set} \cite{widdowson2022average}
is the Minkowski sum $S=M+\La=\{ u+ v \mid  u\in M,  v\in \La\}$.
\bs
\end{dfn}

In dimension $n=1$, a lattice is defined by any non-zero vector $v\in\R$, any periodic point set $S$ is a periodic sequence $\{p_1,\dots,p_m\}+|v|\Z$ with the period $|v|$ equal to the length of the vector $v$.

\begin{dfn}[density functions for periodic sets of points with radii]
\label{dfn:densities}
Let a periodic set $S=\La+M\subset\R^n$ have a unit cell $U$.
For every point $p\in M$, fix a radius $r(p)\geq 0$.
For any integer $k\geq 0$, let $U_k(t)$ be the region within the cell $U$ covered by exactly $k$ closed balls $\bar B(p;r(p)+t)$ for $t\geq 0$ and all  points $p\in M$ and their translations by $\La$.
The $k$-th \emph{density} function $\psi_k[S](t)=\vol[U_k(t)]/\vol[U]$ is the fractional volume of the $k$-fold intersections of these balls within $U$.
\bs
\end{dfn}

The density $\psi_k[S](t)$ can be interpreted as the probability that a random (uniformly chosen in $U$) point $q$ is at a maximum distance $t$ to exactly $k$ balls with initial radii $r(p)$ and all centers $p\in S$.
\medskip

For $k=0$, the $0$-th density $\psi_0[S](t)$ measures the fractional volume of the empty space not covered by any expanding balls $\bar B(p;r(p)+t)$
\medskip

In the simplest case of radii $r(p)=0$, the infinite sequence $\Psi[S]=\{\psi_k(t)\}_{k=0}^{+\infty}$ was called in \cite[section~3]{edels2021} the \emph{density fingerprint} of a periodic point set $S$.
For $k=1$ and small $t>0$ while all equal-sized balls $\bar B(p;t)$ remain disjoint, the 1st density $\psi_1[S](t)$ increases proportionally to $t^n$ but later reaches a maximum and eventually drops back to $0$ when all points of $\R^n$ are covered of by at least two balls.
See the densities $\psi_k$, $k=0,\dots,8$ for the square and hexagonal lattices  in \cite[Fig.~2]{edels2021}.
\medskip

The original densities helped find a missing crystal in the Cambridge Structural Database, which was accidentally confused with a slight perturbation (measured at a different temperature) of another crystal (polymorph) with the same chemical composition, see \cite[section~7]{edels2021}. 
\medskip

The new weighted case with radii $r(p)\geq 0$ in Definition~\ref{dfn:densities} is even more practically important due to different Van der Waals radii, which are individually defined for all chemical elements.
\medskip

The key advantage of density functions over other isometry invariants of periodic crystals (such as symmetries or conventional representations based on a geometry of a minimal cell) is their continuity under perturbations, see details in section~\ref{sec:review} reviewing the related past work.
\medskip

The only limitation is the infinite size of densities $\psi_k(t)$ due to the unbounded parameters: integer index $k\geq 0$ and continuous radius $t\geq 0$. 
\medskip

We state the following problem in full generality to motivate future work on these densities.

\begin{pro}[computation of $\psi_k$]
\label{pro:densities}
Verify if the density functions $\psi_k[S](t)$ from Definition~\ref{dfn:densities} 
can be computed in a polynomial time (in the size $m$ of a motif of $S$) for a fixed dimension $n$.
\bs 
\end{pro}

The main contribution is the full solution of Problem~\ref{pro:densities} for $n=1$.
Theorems~\ref{thm:0-th_density},~\ref{thm:1st_density},~\ref{thm:k-th_density},~\ref{thm:periodicity}, and Corollary~\ref{cor:computation} efficiently compute all $\psi_k[S](t)$ depending on infinitely many values of $k$ and $t$.

\section{Review of related past work}
\label{sec:review}

Periodic Geometry was initiated in 2020 by the problem \cite[section~2.3]{mosca2020voronoi} to design a computable metric on isometry classes of lattices, which is continuous under perturbations of a lattice basis.  
\medskip

Though a Voronoi domain is combinatorially unstable under perturbations, its geometric shape was used to introduce two continuous metrics \cite[Theorems~2,~4]{mosca2020voronoi} requiring approximations due to a minimization over infinitely many rotations. 
\medskip

Similar minimizations over rotations or other continuous parameters are required for the complete invariant isosets \cite{anosova2021isometry,anosova2022recognition} and density functions, which can be practically computed in low dimensions \cite{smith2022practical}, whose completeness was proved for generic periodic point sets in $\R^3$ \cite[Theorem~2]{edels2021}.
The density fingerprint $\Psi[S]$ turned out to be incomplete \cite[section~5]{edels2021} in the example below.

\begin{exa}[periodic sequences $S_{15},Q_{15}\subset\R$]
\label{exa:SQ15}
Widdowson et al. \cite[Appendix~B]{widdowson2022average} discussed homometric sets that can be distinguished by the invariant AMD (Average Minimum Distances) and not by diffraction patterns.
The sequences
\smallskip

$\begin{array}{l}
S_{15} = \{0,1,3,4,5,7,9,10,12\}+15\Z,\\
Q_{15} = \{0,1,3,4,6,8,9,12,14\}+15\Z
\end{array}$ 
\smallskip

\noindent
have the unit cell $[0,15]$ shown as a circle in Fig.~\ref{fig:SQ15}.
\medskip

\begin{figure}[h!]
\includegraphics[width=\linewidth]{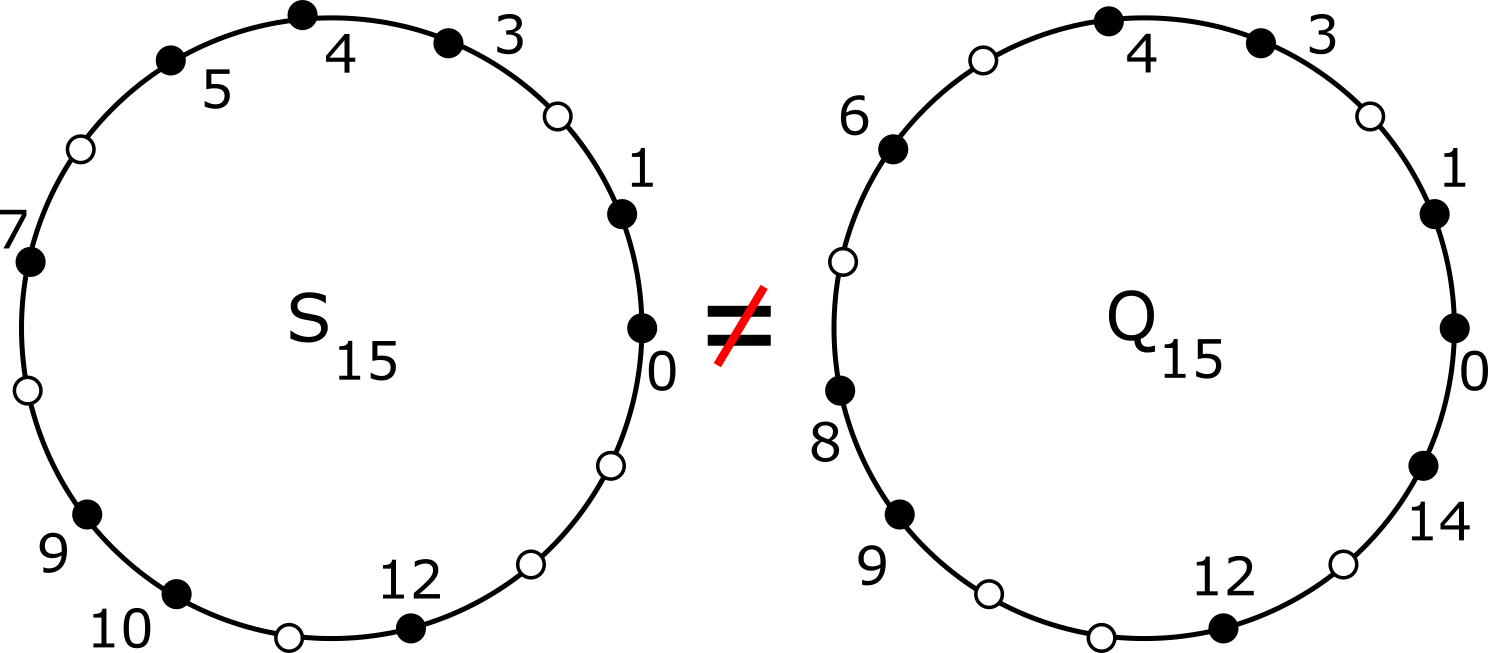}
\caption{Circular versions of the periodic sets $S_{15},Q_{15}$.}
\label{fig:SQ15}      
\end{figure}

These periodic sequences \cite{grunbaum1995use} are obtained as Minkowski sums $S_{15}=U+V+15\Z$ and $Q_{15}=U-V+15\Z$ for 
$U = \{0, 4, 9\}$, $V = \{0, 1, 3\}$.
\bs
\end{exa}

For rational-valued periodic sequences, \cite[Theorem~4]{grunbaum1995use} proved that $r$-th order invariants (combinations of $r$-factor products) up to $r=6$ are enough to distinguish such sequences up to a shift (a rigid motion of $\R$ without reflections).
\medskip

The AMD invariant was extended to the Pointwise Distance Distribution (PDD), whose generic completeness \cite[Theorem~4.4]{widdowson2022resolving} was proved in any dimension $n\geq 1$.
However there are finite sets in $\R^3$ \cite[Fig.~S4]{pozdnyakov2020incompleteness} with the same PDD, which were distinguished by more sophisticated distance-based invariants in \cite[appendix~C]{widdowson2021pointwise}.
\medskip

The subarea of Lattice Geometry developed continuous parameterizations for the moduli spaces of lattices considered up to isometry in dimension two \cite{kurlin2022mathematics,bright2023geographic} and three \cite{kurlin2022complete,bright2021welcome}.
\medskip

For 1-periodic sequences of points in $\R^n$, complete isometry invariants with continuous and computable metrics appeared in \cite{kurlin2022exactly}, see
related results for finite clouds of unlabeled points \cite{smith2022families,kurlin2022computable}.

\section{The 0-th density function $\psi_0$}
\label{sec:0-th_function}

This section proves Theorem~\ref{thm:0-th_density} explicitly describing the 0-th density function $\psi_0[S](t)$ for any periodic sequence $S\subset\R$ of disjoint intervals.
\medskip

For convenience, scale any periodic sequence $S$ to period 1 so that $S$ is given by points $0\leq p_1<\cdots<p_m<1$ with radii $r_1,\dots,r_m$, respectively.
Since the expanding balls in $\R$ are growing intervals, volumes of their intersections linearly change with respect to the variable radius $t$.
Hence any density function $\psi_k(t)$ is piecewise linear and uniquely determined by \emph{corner} points $(a_j,b_j)$ where the gradient of $\psi_k(t)$ changes.
\medskip

To prepare the proof of Theorem~\ref{thm:0-th_density}, we first consider 
Example~\ref{exa:0-th_density} for the simple sequence $S$.

\begin{exa}[$0$-th density function $\psi_0$]
\label{exa:0-th_density}
Let the periodic sequence $S=\left\{0,\dfrac{1}{3},\dfrac{1}{2}\right\}+\Z$ have three points $p_1=0$, $p_2=\dfrac{1}{3}$, $p_3=\dfrac{1}{2}$ of radii $r_1=\dfrac{1}{12}$, $r_2=0$, $r_3=\dfrac{1}{12}$, respectively.
Fig.~\ref{fig:growing_intervals} shows each point $p_i$ and its growing interval 
$$L_i(t)=[(p_i-r_i)-t,(p_i+r_i)+t] \text{ of the length }2r_i+2t$$ 
for $i=1,2,3$ in its own color: red, green, blue.
\medskip

\begin{figure*}[h!]
\includegraphics[width=\textwidth]{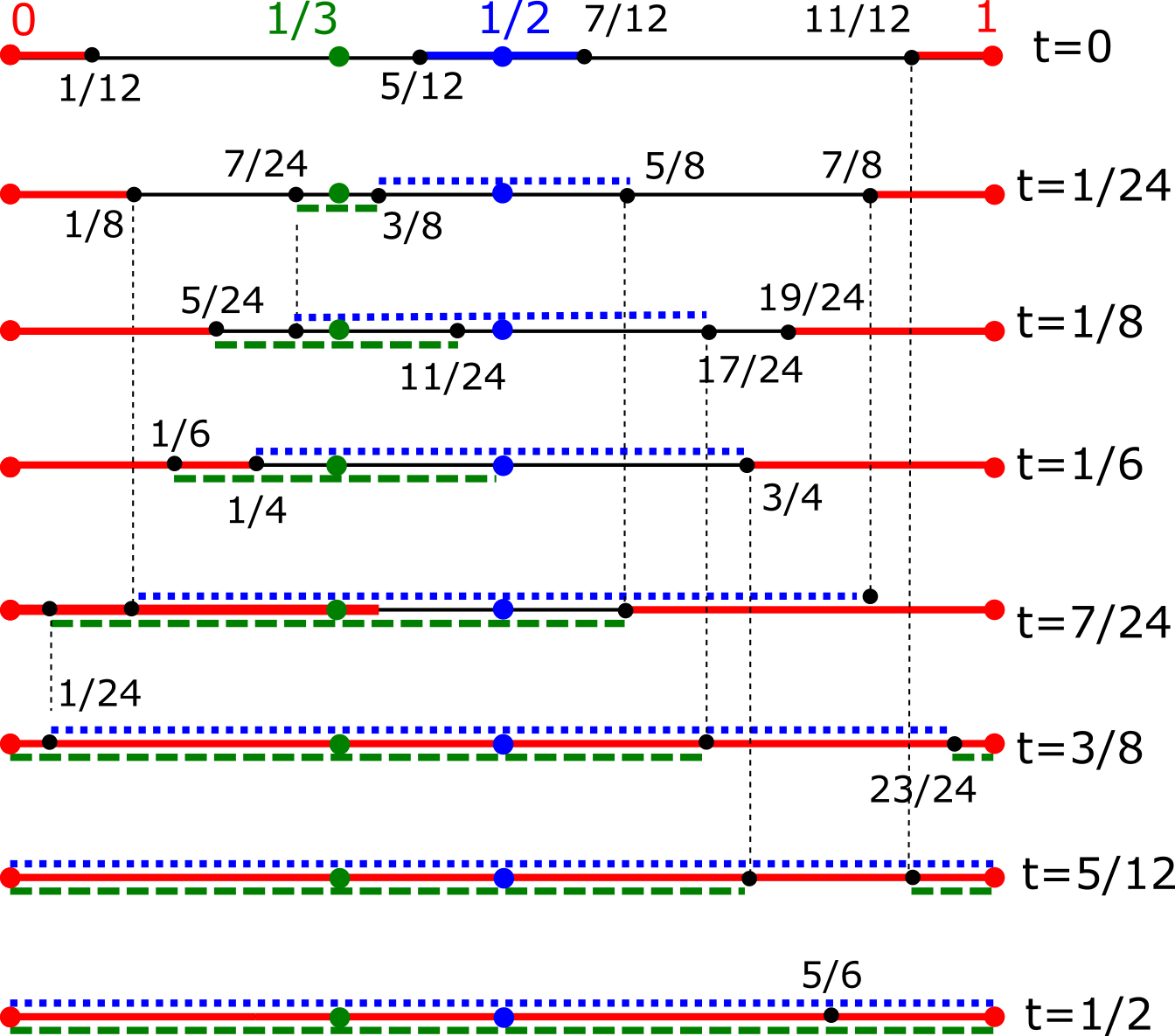}
\caption{The sequence $S=\left\{0,\dfrac{1}{3},\dfrac{1}{2}\right\}+\Z$ has the points of weights $\dfrac{1}{12},0,\dfrac{1}{12}$, respectively. 
The growing intervals around the red point $0\equiv 1\pmod{1}$, green point $\dfrac{1}{3}$, blue point $\dfrac{1}{2}$ have the same color for various radii $t$, see Examples~\ref{exa:0-th_density},~\ref{exa:1st_density},~\ref{exa:2nd_density}.
}
\label{fig:growing_intervals}      
\end{figure*}

By Definition~\ref{dfn:densities} each density function $\psi_k[S](t)$ measures a fractional length covered by exactly $k$ intervals within the unit cell $[0,1]$.
We periodicaly map the endpoints of each growing interval to the unit cell $[0,1]$.
For instance, the interval $[-\dfrac{1}{12}-t,\dfrac{1}{12}+t]$ of the point $p_1=0\equiv 1\pmod{1}$ maps to the red intervals $[0,\dfrac{1}{12}+t]\cup[\dfrac{11}{12}-t,1]$ shown by solid red lines in Fig.~\ref{fig:growing_intervals}.
The same image shows the green interval $[\dfrac{1}{3}-t,\dfrac{1}{3}+t]$ by dashed lines and the blue interval $[\dfrac{5}{12}-t,\dfrac{7}{12}+t]$ by dotted lines.
\medskip

At the moment $t=0$, since the starting intervals are disjoint, they cover the length $l=2(\dfrac{1}{12} + 0 +\dfrac{1}{12})=\dfrac{1}{3}$.
The non-covered part of $[0,1]$ has length $1-\dfrac{1}{3}=\dfrac{2}{3}$.
So the graph of $\psi_0(t)$ at $t=0$ starts from the point $(0,\dfrac{2}{3})$, see Fig.~\ref{fig:0-th_density}.
\medskip

\begin{figure}[h!]
\includegraphics[width=\linewidth]{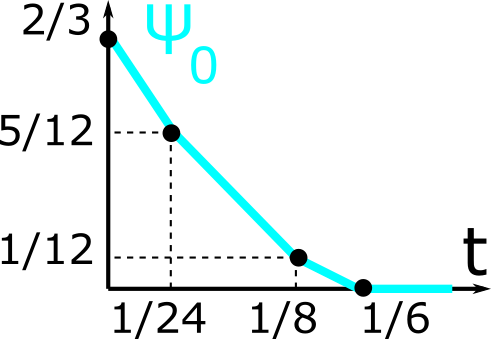}
\caption{The 0-th density function $\psi_0(t)$ for the 1-period sequence $S$  whose points $0,\dfrac{1}{3},\dfrac{1}{2}$ have radii $\dfrac{1}{12},0,\dfrac{1}{12}$, respectively, see Example~\ref{exa:0-th_density}. }
\label{fig:0-th_density}      
\end{figure}

At the first critical moment $t=\dfrac{1}{24}$ when the green and blue intervals collide at $p=\dfrac{3}{8}$, only the intervals $[\dfrac{1}{8},\dfrac{7}{24}]\cup[\dfrac{5}{8},\dfrac{7}{8}]$ of total length $\dfrac{5}{12}$ remain uncovered.
Hence $\psi_0(t)$ linearly drops to the point $(\dfrac{1}{12},\dfrac{5}{12})$. 
At the next critical moment $t=\dfrac{1}{8}$ when the red and green intervals collide at $p=\dfrac{5}{24}$, only the interval $[\dfrac{17}{24},\dfrac{19}{24}]$ of length $\dfrac{1}{12}$ remain uncovered, so $\psi_0(t)$ continues to $(\dfrac{1}{8},\dfrac{1}{12})$.
\medskip

The graph of $\psi_0(t)$ finally returns to the $t$-axis at the point $(\dfrac{1}{6},0)$ and remains there for $t\geq \dfrac{1}{6}$.
\medskip

The piecewise linear behavior of $\psi_0(t)$ can be described by specifying the \emph{corner} points in Fig.~\ref{fig:0-th_density}: 
$\left(0,\dfrac{2}{3}\right)$, 
$\left(\dfrac{1}{24},\dfrac{5}{12}\right)$, 
$\left(\dfrac{1}{8},\dfrac{1}{12}\right)$, 
$\left(\dfrac{1}{6},0\right)$.
\bs
\end{exa}

Theorem~\ref{thm:0-th_density} extends Example~\ref{exa:0-th_density}
 to any periodic sequence $S$ and implies that the 0-th density function $\psi_0(t)$ is uniquely determined by the ordered gap lengths between successive intervals.

\begin{thm}[description of $\psi_0$]
\label{thm:0-th_density}
Let a periodic sequence $S=\{p_1,\dots,p_m\}+\Z$ consist of disjoint intervals with centers $0\leq p_1<\dots<p_m<1$ and radii $r_1,\dots,r_m\geq 0$. 
Consider the \emph{total length} $l=2\sum\limits_{i=1}^m r_i$ and \emph{gaps} between successive intervals $g_i=(p_{i}-r_{i})-(p_{i-1}+r_{i-1})$, where $i=1,\dots,m$ and $p_{0}=p_m-1$, $r_0=r_m$. 
Put the gaps in increasing order: $g_{[1]}\leq g_{[2]}\leq\dots\leq g_{[m]}$.
\medskip

Then the 0-th density $\psi_0[S](t)$ is piecewise linear with the following (unordered) corner points: $(0,1-l)$ and $\left(\dfrac{g_{[i]}}{2},\; 1-l-\sum\limits_{j=1}^{i-1} g_{[j]}-(m-i+1)g_{[i]}\right)$ for $i=1,\dots,m$, so the last corner is $\left(\dfrac{g_{[m]}}{2},0\right)$.
\medskip

If any corners are repeated, e.g. when $g_{[i-1]}=g_{[i]}$, these corners are collapsed into one corner. 
\bs
\end{thm}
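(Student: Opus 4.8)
The plan is to collapse the whole statement into one explicit formula for $\psi_0[S](t)$ and then simply read off its corners. The key step is to describe the uncovered set. Scaling to period $1$ as in the statement, a point $x\in[0,1)$ lies outside every translated ball $\bar B(p_i;r_i+t)$ exactly when it lies strictly between two cyclically consecutive grown intervals, i.e.\ in one of the $m$ open intervals $\big((p_{i-1}+r_{i-1})+t,\ (p_i-r_i)-t\big)$, $i=1,\dots,m$, using the cyclic convention $p_0=p_m-1$, $r_0=r_m$. The $i$-th such interval has length $g_i-2t$ when this is positive and is empty once $t\geq g_i/2$; at that moment intervals $i-1$ and $i$ touch or overlap and contribute nothing new to the uncovered set (in dimension one a growing union of intervals can only shrink the complement, and the cyclic order of the originals never changes, so the $m$ gaps shrink independently). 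Hence
$$\psi_0[S](t)=\sum_{i=1}^m \max\{g_i-2t,\ 0\}.$$
The disjointness hypothesis enters only here, through $\sum_{i=1}^m g_i = 1-l$, which in particular gives $\psi_0[S](0)=1-l$.

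Next I would note that the right-hand side is continuous and piecewise linear, with slope changes possible only at the values $t=g_i/2$. Ordering the gaps $g_{[1]}\leq\dots\leq g_{[m]}$ and setting $g_{[0]}=0$, on the interval $t\in[g_{[i-1]}/2,\ g_{[i]}/2]$ the gaps $g_{[1]},\dots,g_{[i-1]}$ have already closed, so
$$\psi_0[S](t)=\sum_{j=i}^m\big(g_{[j]}-2t\big)=\Big(\sum_{j=i}^m g_{[j]}\Big)-2(m-i+1)\,t,$$
which is linear of slope $-2(m-i+1)$.

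Finally, evaluating at the breakpoint $t=g_{[i]}/2$ and using $\sum_{j=i}^m g_{[j]}=(1-l)-\sum_{j=1}^{i-1}g_{[j]}$ gives the corner
$$\Big(\tfrac{g_{[i]}}{2},\ 1-l-\sum_{j=1}^{i-1}g_{[j]}-(m-i+1)\,g_{[i]}\Big),$$
which for $i=m$ reduces to $\big(g_{[m]}/2,\,0\big)$ since $\sum_{j=1}^m g_{[j]}=1-l$. Together with $(0,1-l)$ these are precisely the asserted corners. When $g_{[i-1]}=g_{[i]}$ the $i$-th linear piece has zero width and its two endpoints coincide, which is exactly the promised collapse of repeated corners (and the $i$-th corner merges with $(0,1-l)$ whenever $g_{[i]}=0$, i.e.\ when the starting intervals already touch). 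There is no serious obstacle in this argument: the only content is the first step, rewriting the geometric definition as a sum of truncated linear functions, and the one point requiring care is the cyclic indexing together with the use of disjointness to obtain $\sum_{i=1}^m g_i=1-l$; everything after that is bookkeeping over the sorted gaps.
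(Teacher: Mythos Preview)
Your proof is correct and follows essentially the same approach as the paper: both track how each of the $m$ gap intervals between consecutive grown intervals $L_{i-1}(t)$ and $L_i(t)$ shrinks linearly until it closes at the critical radius $t=g_i/2$, then read off the corners from the sorted gap lengths. Your explicit formula $\psi_0[S](t)=\sum_{i=1}^m \max\{g_i-2t,0\}$ is a clean restatement of what the paper describes verbally via the total covered length at each critical radius, and your brief justification that neighboring intervals always reach a given gap before any more distant $L_j$ (since right/left endpoints preserve their cyclic order under uniform growth) is exactly the point the paper leaves implicit.
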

\begin{proof}
By Definition~\ref{dfn:densities} the 0-th density function $\psi_0(t)$ measures the total length of subintervals in the unit cell $[0,1]$ that are not covered by any of the growing intervals $L_i(t)=[p_i-r_i-t,p_i+r_i+t]$, $i=1,\dots,m$. 
For $t=0$, since all initial intervals $L_i(0)$ are disjoint, they cover the total length $2\sum\limits_{i=1}^m r_i=l$.
\medskip

Then the graph of $\psi_0(t)$ at $t=0$ starts from the point $(0,1-l)$.
So $\psi_0(t)$ linearly decreases from the initial value $\psi_0(0)=1-l$ except for $m$ critical values of $t$ where one of the gap intervals $[p_i+r_i+t,p_{i+1}-r_{i+1}-t]$ between successive growing intervals $L_i(t)$ and $L_{i+1}(t)$ shrinks to a point.
These critical radii $t$ are ordered according to the gaps $g_{[1]}\leq g_{[2]}\leq\dots\leq g_{[m]}$.
\medskip

The first critical radius is $t=\dfrac{1}{2}g_{[1]}$, when a shortest gap interval of the length $g_{[1]}$ is covered by the growing successive intervals.
At this moment $t=\dfrac{1}{2}g_{[1]}$, all $m$ growing intervals $L_i(t)$ have the total length $l+mg_{[1]}$.
Then the 0-th density $\psi_0(t)$ has the first corner points $(0,1-l)$ and $\left(\dfrac{g_{[1]}}{2},1-l-mg_{[1]}\right)$. 
\medskip

The second critical radius is $t=\dfrac{g_{[2]}}{2}$, when all intervals $L_i(t)$ have the total length $l+g_{[1]}+(m-1)g_{[2]}$, i.e. the next corner point is $\left(\dfrac{g_{[2]}}{2},1-l-g_{[1]}-(m-1)g_{[2]}\right)$. 
If $g_{[1]}=g_{[2]}$, then both corner points coincide, so $\psi_0(t)$ will continue from the joint corner point.
\medskip

The above pattern generalizes to the $i$-th critical radius $t=\dfrac{1}{2}g_{[i]}$, when all covered intervals have the total length $\sum\limits_{j=1}^{i-1}g_{[j]}$ (for the fully covered intervals) plus $(m-i+1)g_{[i]}$ (for the still growing intervals).
\medskip

For the final critical radius $t=\dfrac{g_{[m]}}{2}$, the whole unit cell $[0,1]$ is covered by the grown intervals because $\sum\limits_{j=1}^{m}g_{[j]}=1-l$.
The final corner is $(\dfrac{g_{[m]}}{2},0)$.
\end{proof}

Example~\ref{exa:revisit_0-th_density} applies Theorem~\ref{thm:0-th_density} to get $\psi_0$ found for the periodic sequence $S$ in Example~\ref{exa:0-th_density}.

\begin{exa}[using Theorem~\ref{thm:0-th_density}]
\label{exa:revisit_0-th_density}
The sequence $S=\left\{0,\dfrac{1}{3},\dfrac{1}{2}\right\}+\Z$ in Example~\ref{exa:0-th_density} with points $p_1=0$, $p_2=\dfrac{1}{3}$, $p_3=\dfrac{1}{2}$ of radii $r_1=\dfrac{1}{12}$, $r_2=0$, $r_3=\dfrac{1}{12}$, respectively, has 
$l=2(r_1+r_2+r_3)=\dfrac{1}{3}$ and the initial gaps 
between successive intervals 
$g_1=p_{1}-r_{1}-p_{3}-r_{3}=(1-\dfrac{1}{12})-(\dfrac{1}{2}+\dfrac{1}{12})=\dfrac{1}{3}$, \\ \\
$g_2=p_{2}-r_{2}-p_{1}-r_{1}=(\dfrac{1}{3}-0)-(0+\dfrac{1}{12})=\dfrac{1}{4}$, \\ \\
$g_3=p_{3}-r_{3}-p_{2}-r_{2}=(\dfrac{1}{2}-\dfrac{1}{12})-(\dfrac{1}{3}+0)=\dfrac{1}{12}$.
Order the gaps: 
$g_{[1]}=\dfrac{1}{12}<g_{[2]}=\dfrac{1}{4}<g_{[3]}=\dfrac{1}{3}$.
$1-l=1-\dfrac{1}{3}=\dfrac{2}{3}$, \\ \\
$1-l-3g_{[1]}=\dfrac{2}{3}-\dfrac{3}{12}=\dfrac{5}{12}$, \\ \\
$1-l-g_{[1]}-2g_{[2]}=\dfrac{2}{3}-\dfrac{1}{12}-\dfrac{2}{4}=\dfrac{1}{12}$, \\ \\
$1-l-g_{[1]}-g_{[2]}-g_{[3]}=\dfrac{2}{3}-\dfrac{1}{12}-\dfrac{1}{4}-\dfrac{1}{3}=0$. \\
By Theorem~\ref{thm:0-th_density} $\psi_0(t)$ has the corner points
$(0,1-l)=\left(0,\dfrac{2}{3}\right)$, \\ \\
$\left(\dfrac{1}{2}g_{[1]},1-l-3g_{[1]}\right)=\left(\dfrac{1}{24},\dfrac{5}{12}\right)$, \\ \\
$\left(\dfrac{1}{2}g_{[2]},1-l-g_{[1]}-2g_{[2]}\right)=\left(\dfrac{1}{8},\dfrac{1}{12}\right)$, \\ \\
$\left(\dfrac{1}{2}g_{[3]},1-l-g_{[1]}-g_{[2]}-g_{[3]}\right)=\left(\dfrac{1}{6},0\right)$.
See the graph of the 0-th density $\psi_0(t)$ in Fig.~\ref{fig:0-th_density}.
\bs
\end{exa}

By Theorem~\ref{thm:0-th_density} any 0-th density function $\psi_0(t)$ is uniquely determined by the (unordered) set of gap lengths between successive intervals.
Hence we can re-order these intervals without changing $\psi_0(t)$.
For instance, the periodic sequence $Q=\{0,\dfrac{1}{2},\dfrac{2}{3}\}+\Z$ with points $0,\dfrac{1}{2},\dfrac{2}{3}$ of weights $\dfrac{1}{12},\dfrac{1}{12},0$ has the same set ordered gaps $g_{[1]}=\dfrac{1}{12}$, $d_{[2]}=\dfrac{1}{3}$, $d_{[3]}=\dfrac{1}{2}$ as the periodic sequence $S=\left\{0,\dfrac{1}{3},\dfrac{1}{2}\right\}+\Z$ in Example~\ref{exa:0-th_density}.
\medskip

The above sequences $S,Q$ are related by the mirror reflection $t\mapsto 1-t$.
One can easily construct many non-isometric sequences with $\psi_0[S](t)=\psi_0[Q](t)$.
For any $1\leq i\leq m-3$, the sequences $S_{m,i}=\{0,2,3,\dots,i+2,i+4,i+5,\dots,m+2\}+(m+2)\Z$ have the same interval lengths $d_{[1]}=\dots=d_{[m-2]}=1$, $d_{[m-1]}=d_{[m]}=2$ but are not related by isometry (translations and reflections in $\R$) because the intervals of length 2 are separated by $i-1$ intervals of length 1 in $S_{m,i}$.  

\section{The 1st density function $\psi_1$}
\label{sec:1st_density}

This section proves Theorem~\ref{thm:1st_density} explicitly describing the 1st density $\psi_1[S](t)$ for any periodic sequence $S$ of disjoint intervals.
To prepare the proof of Theorem~\ref{thm:1st_density}, 
Example~\ref{exa:1st_density} finds $\psi_1[S]$ for the sequence $S$ from Example~\ref{exa:0-th_density}.

\begin{figure}[h!]
\includegraphics[width=\linewidth]{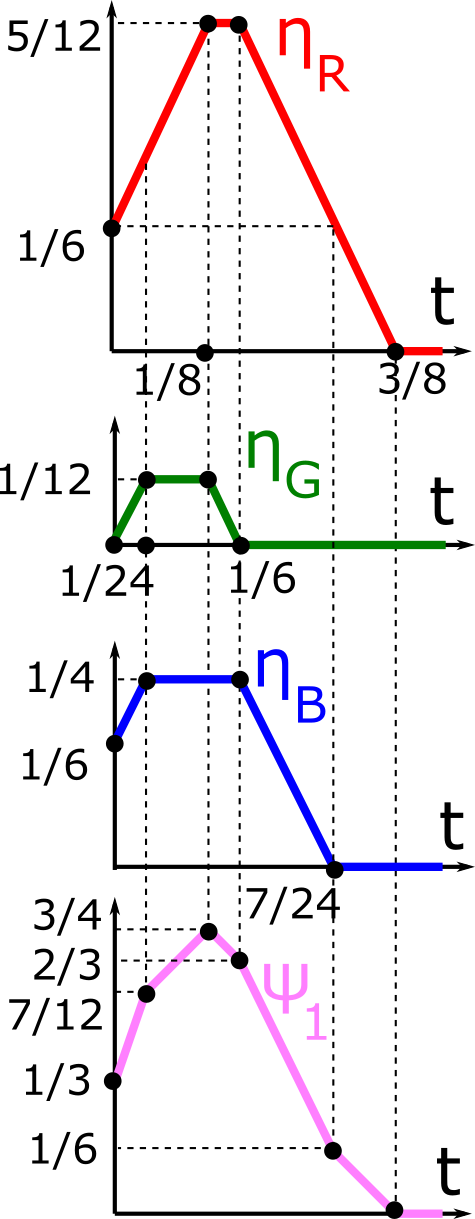}
\caption{The trapezoid functions $\eta_R,\eta_G,\eta_B$ and the 1st density function $\psi_1(t)$ for the 1-period sequence $S$  whose points $0,\dfrac{1}{3},\dfrac{1}{2}$ have radii $\dfrac{1}{12},0,\dfrac{1}{12}$, see Example~\ref{exa:1st_density}. }
\label{fig:1st_density}      
\end{figure}

\begin{exa}[$\psi_1$ for $S=\left\{0,\dfrac{1}{3},\dfrac{1}{2}\right\}+\Z$]
\label{exa:1st_density}
The 1st density function $\psi_1(t)$ can be obtained as a sum of the three \emph{trapezoid} functions $\eta_R$, $\eta_G$, $\eta_B$, each measuring the length of a region covered by a single interval of one color, see Fig.~\ref{fig:growing_intervals}.
\medskip

At the initial moment $t=0$, the red intervals $[0,\dfrac{1}{12}]\cup[\dfrac{11}{12},1]$ have the total length $\eta_R(0)=\dfrac{1}{6}$.
These red intervals $[0,\dfrac{1}{12}+t]\cup[\dfrac{11}{12}-t,1]$ for $t\in[0,\dfrac{1}{8}]$ grow until they touch the green interval $[\dfrac{7}{24},\dfrac{3}{8}]$ and have the total length $\eta_R(\dfrac{1}{8})=\dfrac{1}{6}+\dfrac{2}{8}=\dfrac{5}{12}$ in the second picture of Fig.~\ref{fig:growing_intervals}.
So the graph of the red length $\eta_R(t)$ linearly grows with gradient 2 from the point $(0,\dfrac{1}{6})$ to the corner point $(\dfrac{1}{8},\dfrac{5}{12})$.
\medskip

For $t\in[\dfrac{1}{8},\dfrac{1}{6}]$, the left red interval is shrinking at the same rate (due to the overlapping green interval) as the right red interval continues to grow until $t=\dfrac{1}{6}$, when it touches the blue interval $[\dfrac{1}{4},\dfrac{3}{4}]$.  
Hence the graph of $\eta_R(t)$ remains constant for $t\in[\dfrac{1}{8},\dfrac{1}{6}]$ up to the corner point $(\dfrac{1}{6},\dfrac{5}{12})$.
\medskip

After that, the graph of $\eta_R(t)$ linearly decreases (with gradient $-2$) until all red intervals are fully covered by the green and blue intervals at moment $t=\dfrac{3}{8}$, see the 6th picture in Fig.~\ref{fig:growing_intervals}.
\medskip

Hence the trapezoid function $\eta_R$ has the piecewise linear graph through the corner points $(0,\dfrac{1}{6})$, $(\dfrac{1}{8},\dfrac{5}{12})$, $(\dfrac{1}{6},\dfrac{5}{12})$, $(\dfrac{3}{8},0)$.
After that, $\eta_R(t)=0$ remains constant for $t\geq \dfrac{3}{8}$.
Fig.~\ref{fig:1st_density} shows the graphs of $\eta_R,\eta_G,\eta_B$ and $\psi_1=\eta_R+\eta_G+\eta_B$. 
\bs
\end{exa}

Theorem~\ref{thm:1st_density} extends Example~\ref{exa:1st_density} 
and proves that any $\psi_1(t)$ is a sum of trapezoid functions whose corners are explicitly described. 
We consider any index $i=1,\dots,m$ (of a point $p_i$ or a gap $g_i$) modulo $m$ so that $m+1\equiv 1\pmod{m}$.

\begin{thm}[description of $\psi_1$]
\label{thm:1st_density}
Let a periodic sequence $S=\{p_1,\dots,p_m\}+\Z$ consist of disjoint intervals with centers $0\leq p_1<\dots<p_m<1$ and radii $r_1,\dots,r_m\geq 0$, respectively. 
\medskip

Consider the \emph{gaps} $g_i=(p_{i}-r_{i})-(p_{i-1}+r_{i-1})$, 
where $i=1,\dots,m$ and $p_{0}=p_m-1$, $r_0=r_m$. 
\medskip

Then the 1st density 
$\psi_1(t)$ is the sum of $m$ \emph{trapezoid} functions $\eta_{i}$, $i=1,\dots,m$, with the corners 
$(0,2r_{i})$,
$\left(\dfrac{g_{i}}{2}, g+2r_i\right)$,
$\left(\dfrac{g_{i+1}}{2}, g+2r_i\right)$,  
$\left(\dfrac{g_{i}+g_{i+1}}{2}+r_i,0\right),$ 
where 
$g=\min\{g_{i},g_{i+1}\}$.
\medskip

Hence $\psi_1(t)$ is determined by the unordered set of unordered pairs $(g_{i},g_{i+1})$, $i=1,\dots,m$.
\bs 
\end{thm}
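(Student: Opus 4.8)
The plan is to realise $\psi_1$ as a sum of trapezoids, one per point of the motif, following Example~\ref{exa:1st_density}. After scaling $S$ to period $1$ (so $\vol[U]=1$ and every density equals a plain length), the growing intervals are $L_i(t)=[p_i-r_i-t,\ p_i+r_i+t]$ and their $\Z$-translates. A point of $\R$ lying in exactly one of all these intervals lies in exactly one translate $L_i(t)+n$, so the set $U_1(t)$ of Definition~\ref{dfn:densities} is the disjoint union, over $i\in\{1,\dots,m\}$ and $n\in\Z$, of the pieces covered by $L_i(t)+n$ and by no other interval; by periodicity the piece inside one period has a length $\eta_i(t)$ independent of $n$, whence $\psi_1(t)=\sum_{i=1}^m\eta_i(t)$. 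Everything then reduces to computing $\eta_i$.

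First I would unroll the configuration onto the line $\R$, where the translates $L_j(t)+n$ are linearly ordered and, being pairwise disjoint at $t=0$ and all expanding at unit rate, keep their relative order forever; hence a fixed copy of $L_i(t)$ can be met only by its immediate predecessor and its immediate successor. The predecessor's right endpoint sits at distance $g_i-2t$ from the left end of $L_i(t)$, so they touch exactly at $t=g_i/2$; the successor closes the gap $g_{i+1}-2t$ at $t=g_{i+1}/2$. Therefore the region covered by this copy of $L_i(t)$ alone is the interval between
\[
\ell_i(t)=\max(p_i-r_i-t,\ p_{i-1}+r_{i-1}+t),\qquad
\rho_i(t)=\min(p_i+r_i+t,\ p_{i+1}-r_{i+1}-t),
\]
when $\ell_i(t)\le\rho_i(t)$, and it is empty exactly when the predecessor and successor jointly cover $L_i(t)$. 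A one-line subtraction then gives
\[
\eta_i(t)=\max(0,\ 2r_i+2t-\max(0,2t-g_i)-\max(0,2t-g_{i+1})).
\]

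Next I would read off the trapezoid. Writing $g=\min\{g_i,g_{i+1}\}$, the function $\eta_i$ equals $2r_i+2t$ on $[0,g/2]$, rising from $2r_i$ to $g+2r_i$; then the constant $g+2r_i$ on $[g/2,\ \max\{g_i,g_{i+1}\}/2]$, since one endpoint is now pinned to a neighbour while the other still drifts at unit rate, so the length is frozen; and finally a segment falling to $0$, reaching the axis at $t=(g_i+g_{i+1})/2+r_i$. These are exactly the four corners in the statement, with the two middle corners coalescing when $g_i=g_{i+1}$ and their listed order immaterial. Summing, $\psi_1=\sum_i\eta_i$; since $g_i$ and $g_{i+1}$ enter $\eta_i$ only symmetrically, $\psi_1$ is determined by the unordered collection of the unordered pairs $\{g_i,g_{i+1}\}$ (carrying the associated radii $r_i$), as claimed.

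The main obstacle I anticipate is not conceptual but the edge-case bookkeeping: the case $m=1$, where the two neighbours of a copy of $L_1(t)$ are its own adjacent translates and $g_1=g_2=1-2r_1$; and radii $t$ so large that $L_i(t)$ spans more than a whole period, which would invalidate the claim that only the two nearest neighbours obstruct. Both are handled by a short persistence lemma: once $\eta_i(t_0)=0$ one has $\eta_i(t)=0$ for all $t\ge t_0$, because if a copy of $L_i$ is already engulfed by the union of the remaining intervals then so are its tiny forward and backward extensions. After that, the formula above, and hence the trapezoid description of $\eta_i$, is valid for every $t\ge 0$ with no wrap-around correction. The heart of the argument is simply the observation that, on the unrolled line, an expanding interval is blocked only by its two nearest neighbours.
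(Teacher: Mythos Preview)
Your proof is correct and follows essentially the same route as the paper: decompose $\psi_1=\sum_i\eta_i$ and analyse each $\eta_i$ by tracking when the two neighbouring intervals $L_{i\pm1}$ encroach, with your explicit $\max/\min$ formula being a slightly tidier bookkeeping device than the paper's verbal phase-by-phase description. One small wording issue: ``met only by its immediate predecessor and successor'' is literally false for large $t$, but your ``relative order is preserved'' observation already supplies the real fix (since right endpoints stay ordered, any overlap of $L_{i\pm2}$ with $L_i$ is contained in that of $L_{i\pm1}$), so the persistence lemma, while true, is not actually needed.
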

\begin{proof}
The 1st density $\psi_1(t)$ equals the total length of subregions covered by exactly one of the intervals $L_i(t)=[p_i-r_i-t,p_i+r_i+t]$, $i=1,\dots,m$, where all intervals are taken modulo 1 within $[0,1]$.
\medskip

Hence $\psi_1(t)$ is the sum of the functions $\eta_{1i}$, each measuring the length of the subinterval of $L_i(t)$ not covered by other intervals $L_j(t)$, $j\in\{1,\dots,m\}-\{i\}$.
\medskip

Since the initial intervals $L_i(0)$ are disjoint, each function $\eta_{1i}(t)$ starts from the value $\eta_{1i}(0)=2r_i$ and linearly grows (with gradient 2) up to $\eta_{i}(\dfrac{1}{2}g)=2r_i+g$, where $g=\min\{g_{i},g_{i+1}\}$, when the growing interval $L_i(t)$ of the length $2r_i+2t=2r_i+g$ touches its closest neighboring interval $L_{i\pm 1}(t)$ with a shortest gap $g$.
\medskip

If (say) $g_{i}<g_{i+1}$, then the subinterval covered only by $L_i(t)$ is shrinking on the left and is growing at the same rate on the right until $L_i(t)$ touches the growing interval $L_{i+1}(t)$ on the right.
During this growth, when $t$ is between $\dfrac{1}{2}g_{i}$ and $\dfrac{1}{2}g_{i+1}$, the trapezoid function $\eta_{i}(t)=g$ remains constant.
\medskip

If $g_{i}=g_{i+1}$, this horizontal line collapses to one point in the graph of $\eta_{i}(t)$.
For $t\geq\max\{g_{i},g_{i+1}\}$, the subinterval covered only by $L_i(t)$ is shrinking on both sides until the neighboring intervals $L_{i\pm 1}(t)$ meet at a mid-point between their initial closest endpoints $p_{i-1}+r_{i-1}$ and $p_{i+1}-r_{i+1}$.
This meeting time is $t=\dfrac{1}{2}(p_{i+1}-r_{i+1}-p_{i-1}-r_{i-1}) =\dfrac{1}{2}(g_{i}+2r_i+g_{i+1})$, which is also illustrated by Fig.~\ref{fig:intervals_gaps}. 
So the trapezoid function $\eta_{i}$ has the corners 
$(0,2r_i)$, 
$\left(\dfrac{g_{i}}{2}, 2r_i+g\right)$, 
$\left(\dfrac{g_{i+1}}{2},2r_i+g\right)$, 
$\left(\dfrac{g_{i}+g_{i+1}}{2}+r_i,0\right)$
as expected.
\end{proof}

\begin{figure*}[h!]
\includegraphics[width=\linewidth]{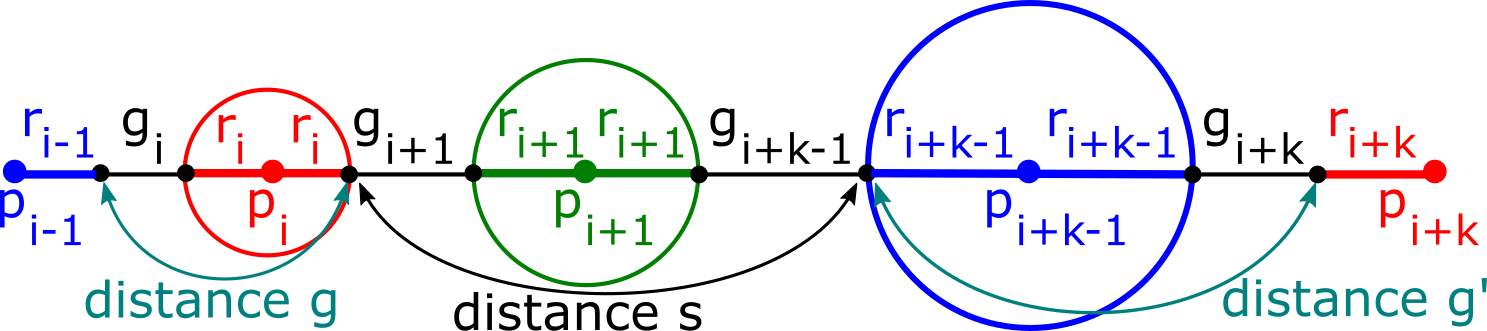}
\caption{The distances $g,s,g'$ between line intervals used in the proofs of Theorems~\ref{thm:1st_density} and~\ref{thm:k-th_density}, shown here for $k=3$. }
\label{fig:intervals_gaps}      
\end{figure*}

Example~\ref{exa:revisit_1st_density} applies Theorem~\ref{thm:1st_density} to get $\psi_1$ found for the periodic sequence $S$ in Example~\ref{exa:1st_density}.

\begin{exa}[using Theorem~\ref{thm:1st_density} for $\psi_1$]
\label{exa:revisit_1st_density}
The sequence $S=\left\{0,\dfrac{1}{3},\dfrac{1}{2}\right\}+\Z$ in Example~\ref{exa:1st_density} with points $p_1=0$, $p_2=\dfrac{1}{3}$, $p_3=\dfrac{1}{2}$ of radii $r_1=\dfrac{1}{12}$, $r_2=0$, $r_3=\dfrac{1}{12}$, respectively, has 
 the initial gaps between successive intervals 
$g_1=\dfrac{1}{3}$, 
$g_2=\dfrac{1}{4}$,
$g_3=\dfrac{1}{12}$, see all the computations in Example~\ref{exa:revisit_0-th_density}.
\medskip

\noindent
\textbf{Case (R)}.
In Theorem~\ref{thm:1st_density} for the trapezoid function $\eta_R=\eta_1$ measuring the fractional length covered only by the red interval, we set $k=1$ and $i=1$.
Then $r_i=\dfrac{1}{12}$, $g_i=\dfrac{1}{3}$ and $g_{i+1}=\dfrac{1}{4}$, so 
$$\dfrac{g_i+g_{i+1}}{2}+r_i=\dfrac{1}{2}\left(\dfrac{1}{3}+\dfrac{1}{4}\right)+\dfrac{1}{12}=\dfrac{3}{8},$$ 
$g=\min\{g_i,g_{i+1}\}=\dfrac{1}{4}$, $g+2r_i=\dfrac{1}{4}+\dfrac{2}{12}=\dfrac{5}{12}$.
\medskip

Then $\eta_{R}=\eta_{1}$ has the following corner points:
$$\begin{array}{l}
\left(0,2r_i\right)=\left(0,\dfrac{1}{6}\right), \quad
\left(\dfrac{g_{i}}{2},g+2r_i\right)=\left(\dfrac{1}{6},\dfrac{5}{12}\right), \\ \\
\left(\dfrac{g_{i+1}}{2},g+2r_i\right)=\left(\dfrac{1}{8},\dfrac{5}{12}\right), \\ \\ 
\left(\dfrac{g_{i}+g_{i+1}}{2}+r_i,0\right)=\left(\dfrac{3}{8},0\right),
\end{array}$$
where the two middle corners are accidentally swapped due to $g_i>g_{i+1}$ but they define the same trapezoid function as in the first picture of Fig.~\ref{fig:1st_density}.
\medskip

\noindent
\textbf{Case (G)}.
In Theorem~\ref{thm:1st_density} for the trapezoid function $\eta_G=\eta_2$ measuring the fractional length covered only by the green interval, we set $k=1$ and $i=2$.
Then $r_i=0$, $g_i=\dfrac{1}{4}$ and $g_{i+1}=\dfrac{1}{12}$, so 
$$\dfrac{g_i+g_{i+1}}{2}+r_i=\dfrac{1}{2}\left(\dfrac{1}{4}+\dfrac{1}{12}\right)+0=\dfrac{1}{6},$$ 
$g=\min\{g_i,g_{i+1}\}=\dfrac{1}{12}$, $g+2r_i=\dfrac{1}{12}+0=\dfrac{1}{12}$.
\medskip

Then $\eta_{G}=\eta_{2}$ has the following corner points
exactly as shown in the second picture of Fig.~\ref{fig:1st_density}:
$$\begin{array}{l}
\left(0,2r_i\right)=\left(0,0\right), \quad
\left(\dfrac{g_{i}}{2},g+2r_i\right)=\left(\dfrac{1}{8},\dfrac{1}{12}\right), \\ \\
\left(\dfrac{g_{i+1}}{2},g+2r_i\right)=\left(\dfrac{1}{24},\dfrac{5}{12}\right), \\ \\ 
\left(\dfrac{g_{i}+g_{i+1}}{2}+r_i,0\right)=\left(\dfrac{1}{6},0\right).
\end{array}$$

\noindent
\textbf{Case (B)}.
In Theorem~\ref{thm:1st_density} for the trapezoid function $\eta_B=\eta_3$ measuring the fractional length covered only by the blue interval, we set $k=1$ and $i=3$.
Then $r_i=\dfrac{1}{12}$, $g_i=\dfrac{1}{12}$ and $g_{i+1}=\dfrac{1}{3}$, so 
$$\dfrac{g_i+g_{i+1}}{2}+r_i=\dfrac{1}{2}\left(\dfrac{1}{12}+\dfrac{1}{3}\right)+\dfrac{1}{12}=\dfrac{7}{24},$$ 
$g=\min\{g_i,g_{i+1}\}=\dfrac{1}{12}$, $g+2r_i=\dfrac{1}{12}+\dfrac{2}{12}=\dfrac{1}{4}$.
\medskip

Then $\eta_{B}=\eta_{3}$ has the following corner points:
$$\begin{array}{l}
\left(0,2r_i\right)=\left(0,\dfrac{1}{6}\right), \quad
\left(\dfrac{g_{i}}{2},g+2r_i\right)=\left(\dfrac{1}{24},\dfrac{1}{4}\right), \\ \\
\left(\dfrac{g_{i+1}}{2},g+2r_i\right)=\left(\dfrac{1}{6},\dfrac{1}{4}\right), \\ \\ 
\left(\dfrac{g_{i}+g_{i+1}}{2}+r_i,0\right)=\left(\dfrac{7}{24},0\right)
\end{array}$$
exactly as shown in the third picture of Fig.~\ref{fig:1st_density}.
\bs
\end{exa}

\section{Higher density functions $\psi_k$}
\label{sec:k-th_density}

This section proves Theorem~\ref{thm:k-th_density} describing the $k$-th density function $\psi_k[S](t)$ for any $k\geq 2$ and a periodic sequence $S$ of disjoint intervals.
\medskip

To prepare the proof of Theorem~\ref{thm:k-th_density}, 
Example~\ref{exa:2nd_density} computes $\psi_2[S]$ for $S$ from Example~\ref{exa:0-th_density}.

\begin{exa}[$\psi_2$ for $S=\left\{0,\dfrac{1}{3},\dfrac{1}{2}\right\}+\Z$]
\label{exa:2nd_density}
The density $\psi_2(t)$ can be found as the sum of the \emph{trapezoid} functions $\eta_{GB},\eta_{BR},\eta_{RG}$, each measuring the length of a double intersection, see Fig.~\ref{fig:growing_intervals}.
\medskip

For the green interval $[\dfrac{1}{3}-t,\dfrac{1}{3}+t]$ and the blue interval $[\dfrac{5}{12}-t,\dfrac{7}{12}+t]$, the graph of the function $\eta_{GB}(t)$ is piecewise linear and starts at the point $(\dfrac{1}{24},0)$ because these intervals touch at $t=\dfrac{1}{24}$.
\medskip

The green-blue intersection $[\dfrac{5}{12}-t,\dfrac{1}{3}+t]$ grows until $t=\dfrac{1}{6}$, when the resulting interval $[\dfrac{1}{4},\dfrac{1}{2}]$ touches the red interval on the left.
At the same time, the graph of $\eta_{GB}(t)$ is linearly growing (with gradient 2) to the corner $(\dfrac{1}{6},\dfrac{1}{4})$, see Fig,~\ref{fig:2nd_density}.
\medskip

\begin{figure}[h!]
\includegraphics[width=\linewidth]{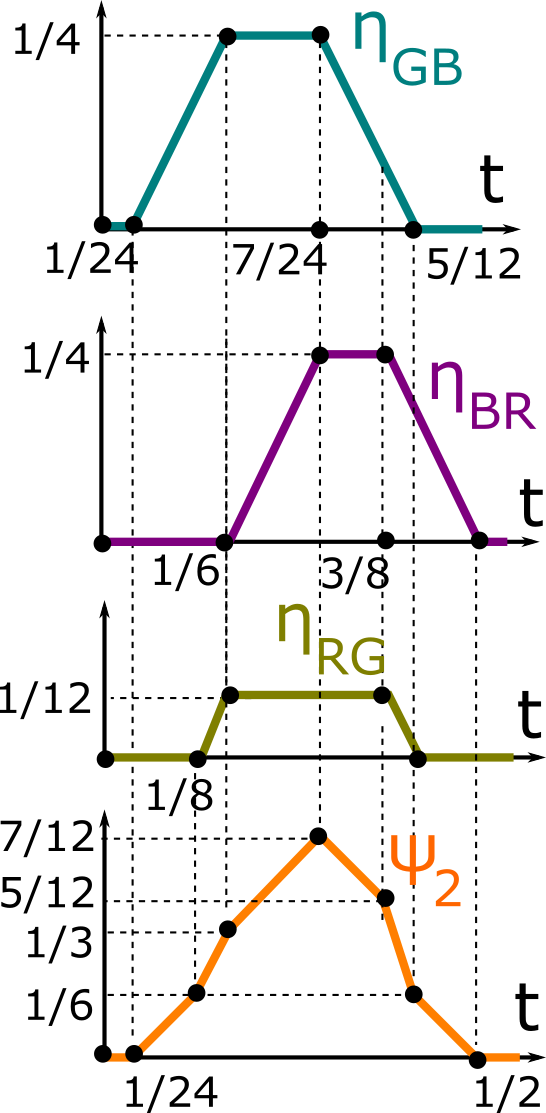}
\caption{The trapezoid functions $\eta_{GB},\eta_{BR},\eta_{RG}$ and the 2nd density function $\psi_2(t)$ for the 1-period sequence $S$ whose points $0,\dfrac{1}{3},\dfrac{1}{2}$ have radii $\dfrac{1}{12},0,\dfrac{1}{12}$, see Example~\ref{exa:2nd_density}. }
\label{fig:2nd_density}      
\end{figure}

For $t\in[\dfrac{1}{6},\dfrac{7}{24}]$, the green-blue intersection interval becomes shorter on the left, but grows at the same rate on the right until $t=\dfrac{7}{24}$ when $[\dfrac{1}{8},\dfrac{5}{8}]$ touches the red interval $[\dfrac{5}{8},1]$ on the right, see the 5th picture in Fig.~\ref{fig:growing_intervals}.
So the graph of $\eta_{GB}(t)$ remains constant up to the point $(\dfrac{7}{24},\dfrac{1}{4})$.
\medskip

For $t\in[\dfrac{7}{24},\dfrac{5}{12}]$ the green-blue intersection interval is shortening from both sides.
So the graph of $\eta_{GB}(t)$ linearly decreases (with gradient $-2$) and returns to the $t$-axis at the corner $(\dfrac{5}{12},0)$, then remains constant $\eta_{GB}(t)=0$ for $t\geq \dfrac{5}{12}$.
\medskip

Fig.~\ref{fig:2nd_density} shows all trapezoid functions for double intersections and $\psi_2=\eta_{GB}+\eta_{BR}+\eta_{RG}$.
\bs
\end{exa}

\begin{thm}[description of $\psi_k$ for $k\geq 2$]
\label{thm:k-th_density}
Let a periodic sequence $S=\{p_1,\dots,p_m\}+\Z$ consist of disjoint intervals with centers $0\leq p_1<\dots<p_m<1$ and radii $r_1,\dots,r_m\geq 0$, respectively. 
Consider the \emph{gaps} $g_i=(p_{i}-r_{i})-(p_{i-1}+r_{i-1})$ between the successive intervals of $S$, where $i=1,\dots,m$ and $p_{0}=p_m-1$, $r_0=r_m$. 
\medskip

For $k\geq 2$, the density function $\psi_k(t)$ equals the sum of $m$ \emph{trapezoid} functions $\eta_{k,i}(t)$, $i=1,\dots,m$, each having the following corner points: 
$$\left(\dfrac{s}{2},0\right),  
\left(\dfrac{g+s}{2},g\right),
\left(\dfrac{s+g'}{2},g\right),
\left(\dfrac{g+s+g'}{2},0\right),$$
where 
$g,g'$ are the minimum and maximum values in the pair $\{g_{i}+2r_i,g_{i+k}+2r_{i+k-1}\}$, and $s=\sum\limits_{j=i+1}^{i+k-1}g_j+2\sum\limits_{j=i+1}^{i+k-2} r_j$,
 so $s=g_{i+1}$ for $k=2$.
\medskip

Hence $\psi_k(t)$ is determined by the unordered set of the ordered tuples $(g,s,g')$, $i=1,\dots,m$.
\bs 
\end{thm}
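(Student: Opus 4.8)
The plan is to mirror the proofs of Theorems~\ref{thm:0-th_density} and~\ref{thm:1st_density}. I would first record the one-dimensional structural fact that both endpoint families $p_j-r_j-t$ and $p_j+r_j+t$ are strictly increasing in $j$, where $L_j(t)=[p_j-r_j-t,\,p_j+r_j+t]$ runs over all translates of the intervals, so that whenever $L_a(t)$ and $L_b(t)$ with $a<b$ share a point $x$, every $L_c(t)$ with $a\le c\le b$ contains $x$ as well. Hence the set of indices whose interval covers a given point is always a block of consecutive integers, and the locus covered by exactly $k$ intervals is the disjoint union, over the first index $i$ of the covering block (reduced modulo $m$), of the set $E_{k,i}(t)$ of points lying in $\bigcap_{j=i}^{i+k-1}L_j(t)$ but in neither $L_{i-1}(t)$ nor $L_{i+k}(t)$. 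By the block property no further neighbour can contribute, since any point of the $k$-fold intersection lying in $L_{i-2}(t)$ would already lie in $L_{i-1}(t)$, and symmetrically on the right. Summing lengths over one period gives $\psi_k(t)=\sum_{i=1}^m\eta_{k,i}(t)$, where $\eta_{k,i}(t)$ is the length of $E_{k,i}(t)$; the hypothesis $k\ge2$ enters here because for $k=1$ the single interval $L_i(0)$ already has positive length $2r_i$, which is the situation handled by Theorem~\ref{thm:1st_density}.

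\medskip

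Second, I would compute $\eta_{k,i}(t)$ by tracking endpoints. Monotonicity gives $\bigcap_{j=i}^{i+k-1}L_j(t)=[(p_{i+k-1}-r_{i+k-1})-t,\;(p_i+r_i)+t]$, nonempty precisely for $t\ge s/2$, where $s=(p_{i+k-1}-r_{i+k-1})-(p_i+r_i)=\sum_{j=i+1}^{i+k-1}g_j+2\sum_{j=i+1}^{i+k-2}r_j$ (so $s=g_{i+1}$ when $k=2$); its length is then $2t-s$, increasing at rate $2$. Using $p_{i-1}+r_{i-1}=(p_i-r_i)-g_i$ and $p_{i+k}-r_{i+k}=(p_i+r_i)+s+2r_{i+k-1}+g_{i+k}$, a one-line computation shows the right endpoint of $L_{i-1}(t)$ first reaches the left endpoint of the $k$-fold intersection at $t=\tfrac{1}{2}(s+d)$ with $d=g_i+2r_i$, and the left endpoint of $L_{i+k}(t)$ first reaches the right endpoint of the intersection at $t=\tfrac{1}{2}(s+d')$ with $d'=g_{i+k}+2r_{i+k-1}$. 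Setting $g=\min\{d,d'\}$ and $g'=\max\{d,d'\}$, the length $\eta_{k,i}(t)$ grows at rate $2$ on $[\,s/2,\,(s+g)/2\,]$ up to the value $g$; stays constant equal to $g$ on $[\,(s+g)/2,\,(s+g')/2\,]$, because on that stretch one endpoint recedes at unit rate while the opposite endpoint still advances at unit rate; then decreases at rate $2$ on $[\,(s+g')/2,\,(g+s+g')/2\,]$ down to $0$, after which $L_{i-1}(t)\cup L_{i+k}(t)$ covers the entire $k$-fold intersection and $\eta_{k,i}$ stays $0$. This is exactly the trapezoid through $\left(\tfrac{s}{2},0\right)$, $\left(\tfrac{g+s}{2},g\right)$, $\left(\tfrac{s+g'}{2},g\right)$, $\left(\tfrac{g+s+g'}{2},0\right)$; the argument does not depend on which of $d,d'$ is smaller, and when $d=d'$ the plateau collapses to a point. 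Since $\eta_{k,i}$ depends on $i$ only through $(g,s,g')$, summing over $i$ shows $\psi_k$ is determined by the unordered collection of these tuples.

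\medskip

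The main obstacle is the combinatorial step rather than the arithmetic: carefully verifying that the exactly-$k$ locus splits as the stated disjoint union of blocks of $k$ consecutive intervals, that only the two flanking intervals $L_{i-1}(t)$ and $L_{i+k}(t)$ can ever cut into the $k$-fold intersection, and that the endpoint orderings survive across the period boundary so that the cyclic indexing is legitimate; one also has to absorb the degenerate coincidences of corners, both when $d=d'$ within a single block and when several blocks happen to share the same tuple $(g,s,g')$. Once that skeleton is in place, the remainder is the routine endpoint bookkeeping above, entirely parallel to Theorems~\ref{thm:0-th_density} and~\ref{thm:1st_density}, with Figure~\ref{fig:intervals_gaps} supplying the picture for the distances $g$, $s$, $g'$.
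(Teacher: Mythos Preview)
Your proposal is correct and follows essentially the same approach as the paper: both decompose $\psi_k$ into $m$ trapezoid functions $\eta_{k,i}$, one per block of $k$ consecutive intervals, and obtain the four corners by tracking the moments when the $k$-fold intersection $\bigcap_{j=i}^{i+k-1}L_j(t)$ first appears (at $t=s/2$), when each flanking interval $L_{i-1}(t)$, $L_{i+k}(t)$ reaches it, and when it is fully covered. Your treatment of the combinatorial step (the endpoint monotonicity forcing the covering indices to form a consecutive block, hence only the two flanking intervals can cut into the intersection) is in fact more explicit than the paper's, which simply asserts that only successive intervals contribute; otherwise the computations and the resulting corner formulae coincide line for line.
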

\begin{proof}
The $k$-th density function $\psi_k(t)$ measures the total fractional length of $k$-fold intersections among $m$ intervals $L_i(t)=[p_i-r_i-t,p_i+r_i+t]$, $i=1,\dots,m$.
Now we visualize all such intervals $L_i(t)$ in the line $\R$ without mapping them modulo 1 to the unit cell $[0,1]$.
\medskip

Since all radii $r_i\geq 0$, only $k$ successive intervals can contribute to $k$-fold intersections.
So a $k$-fold intersection of growing intervals emerges only when two intervals $L_i(t)$ and $L_{i+k-1}(t)$ overlap because their intersection should be also covered by all the intermediate intervals $L_i(t),L_{i+1}(t),\dots,L_{i+k-1}(t)$.
\medskip

Then the density $\psi_k(t)$ equals the sum of the $m$ \emph{trapezoid functions} $\eta_{k,i}$, $i=1,\dots,m$, each equal to the length of the
$k$-fold intersection $\cap_{j=i}^{i+k-1} L_j(t)$ not covered by other intervals.
Then $\eta_{k,i}(t)$ remains 0 until the first critical moment $t$ when
$2t$ equals the distance between the points $p_i+r_i$ and $p_{i+k-1}-r_{i+k-1}$ in $\R$, see Fig.~\ref{fig:intervals_gaps}, so
$2t=\sum\limits_{j=i+1}^{i+k-1}g_j+2\sum\limits_{j=i+1}^{i+k-2} r_j=s$.
Hence $t=\dfrac{s}{2}$ and $(\dfrac{s}{2},0)$ is the first corner point of 
$\eta_{k,i}(t)$.
\medskip

At $t=\dfrac{s}{2}$, the interval of the $k$-fold intersection $\cap_{j=i}^{i+k-1} L_j(t)$ starts expanding on both sides.
Hence $\eta_{k,i}(t)$ starts increasing (with gradient 2) until the $k$-fold intersection touches one of the neighboring intervals $L_{i-1}(t)$ or $L_{i+k}(t)$ on the left or on the right.
\medskip

The left interval $L_{i-1}(t)$ touches the $k$-fold intersection $\cap_{j=i}^{i+k-1} L_j(t)$ when
$2t$ equals the distance from $p_{i-1}+r_{i-1}$ (the right endpoint of $L_{i-1}$) to $p_{i+k-1}-r_{i+k-1}$ (the left endpoint of $L_{i+k-1}$), see Fig.~\ref{fig:intervals_gaps}, so
$$2t=\sum\limits_{j=i}^{i+k-1}g_j+2\sum\limits_{j=i}^{i+k-2} r_j=g_{i}+2r_i+s.$$

The right interval $L_{i+k-1}(t')$ touches the $k$-fold intersection $\cap_{j=i}^{i+k-1} L_j(t')$ when
$2t'$ equals the distance from $p_{i}+r_{i}$ (the right endpoint of $L_{i}$) to $p_{i+k}-r_{i+k}$ (the left endpoint of $L_{i+k}$), see Fig.~\ref{fig:intervals_gaps}, so 
$$2t'=\sum\limits_{j=i+1}^{i+k}g_j+2\sum\limits_{j=i+1}^{i+k-1} r_j=s+g_{i+k}+2r_{i+k-1}.$$

If (say) $g_{i}+2r_i=g<g'=g_{i+k}+2r_{i+k-1}$, the $k$-fold intersection $\cap_{j=i}^{i+k-1} L_j(t)$ first touches $L_{i-1}$ at the earlier moment $t$ before reaching $L_{i+k}(t')$ at the later moment $t'$.
At the earlier moment, $\eta_{k,i}(t)$ equals $2(t-\dfrac{s}{2})=g_i+2r_i=g$ and has the corner $(\dfrac{g+s}{2},g)$.
\medskip

After that, the $k$-fold intersection is shrinking on the left and is expanding at the same rate on the right.
So the function $\eta_{k,i}(t)=g$ remains constant until the $k$-fold intersection touches the right interval $L_{i+k}(t')$.
At this later moment $t'=\dfrac{s+g_{i+k}}{2}+r_{i+k-1}=g'$, $\eta_{k,i}(t')$ still equals $g$ and has the corner $(\dfrac{s+g'}{2},g)$.
\medskip

If $g_{i}+2r_i=g'>g=g_{i+k}+2r_{i+k-1}$, the growing intervals $L_{i-1}(t)$ and $L_{i+k-1}(t)$ touch the $k$-fold intersection $\cap_{j=i}^{i+k-1} L_j(t)$ in the opposite order.
However, the above arguments lead to the same corners $(\dfrac{g+s}{2},g)$ and $(\dfrac{s+g'}{2},g)$ of $\eta_{k,i}(t)$.
If $g=g'$, the two corners collapse to one corner in the graph of $\eta_{k,i}(t)$.
\medskip

The $k$-fold intersection $\cap_{j=i}^{i+k-1} L_j(t)$ becomes fully covered when the intervals $L_{i-1}(t),L_{i+k}(t)$.
At this moment, $2t$ equals the distance from $p_{i-1}+r_{i-1}$ (the right endpoint of $L_{i-1}$) to $p_{i+k}-r_{i+k}$ (the left endpoint of $L_{i+k}$), see Fig.~\ref{fig:intervals_gaps}, so 
$2t=\sum\limits_{j=i}^{i+k}g_j+2\sum\limits_{j=i}^{i+k-1} r_j=g_i+2r_i+s+g_{i+k}+2r_{i+k-1}=g+s+g'.$
The graph of $\eta_{k,i}(t)$ has the final corner $\left(\dfrac{g+s+g'}{2},0\right)$.
\end{proof}

Example~\ref{exa:revisit_k-th_density} applies Theorem~\ref{thm:k-th_density} to get $\psi_2$ found for the periodic sequence $S$ in Example~\ref{exa:0-th_density}.

\begin{exa}[using Theorem~\ref{thm:k-th_density} for $\psi_2$]
\label{exa:revisit_k-th_density}
The sequence $S=\left\{0,\dfrac{1}{3},\dfrac{1}{2}\right\}+\Z$ in Example~\ref{exa:1st_density} with points $p_1=0$, $p_2=\dfrac{1}{3}$, $p_3=\dfrac{1}{2}$ of radii $r_1=\dfrac{1}{12}$, $r_2=0$, $r_3=\dfrac{1}{12}$, respectively, has 
 the initial gaps 
$g_1=\dfrac{1}{3}$, 
$g_2=\dfrac{1}{4}$,
$g_3=\dfrac{1}{12}$, see Example~\ref{exa:revisit_0-th_density}.
\medskip

In Theorem~\ref{thm:k-th_density}, the 2nd density function $\psi_2[S](t)$ is expressed as a sum of the trapezoid functions computed via their corners below.
\medskip

\noindent
\textbf{Case (GB)}.
For the function $\eta_{GB}$ measuring the double intersections of the green and blue intervals centered at $p_2=p_i$ and $p_3=p_{i+k-1}$, we set $k=2$ and $i=2$.
Then we have the radii $r_i=0$ and $r_{i+1}=\dfrac{1}{12}$, the gaps $g_i=\dfrac{1}{4}$, $g_{i+1}=\dfrac{1}{12}$, $g_{i+2}=\dfrac{1}{3}$, and the sum $s=g_{i+1}=\dfrac{1}{12}$.
The pair 
$$\left\{g_i+2r_i,g_{i+2}+2r_{i+1}\right\}=\left\{\dfrac{1}{4}+0,\dfrac{1}{3}+\dfrac{2}{12}\right\}$$ has the minimum value $g=\dfrac{1}{4}$ and maximum value $g'=\dfrac{1}{2}$.
Then $\eta_{2,2}[S](t)=\eta_{GB}$ has the following corners as expected in the top picture of Fig.~\ref{fig:2nd_density}: 
$$\begin{array}{l}
\left(\dfrac{s}{2},0\right)=\left(\dfrac{1}{24},0\right),\\ \\
\left(\dfrac{g+s}{2},g\right)
=\left(\dfrac{1}{2}\Big(\dfrac{1}{4}+\dfrac{1}{12}\Big),\dfrac{1}{4}\right)=\left(\dfrac{1}{6},\dfrac{1}{4}\right), \\ \\
\left(\dfrac{s+g'}{2},g\right)
=\left(\dfrac{1}{2}\Big(\dfrac{1}{12}+\dfrac{1}{2}\Big),\dfrac{1}{4}\right)=\left(\dfrac{7}{24},\dfrac{1}{4}\right), \\ \\
\Big(\dfrac{g+s+g'}{2},0\Big)
=\big(\dfrac{1}{2}(\dfrac{1}{4}+\dfrac{1}{12}+\dfrac{1}{2}),0\big)=\Big(\dfrac{5}{12},0\Big).
\end{array}$$

\noindent
\textbf{Case (BR)}.
For the trapezoid function $\eta_{BR}$ measuring the double intersections of the blue and red intervals centered at $p_3=p_i$ and $p_1=p_{i+k-1}$, we set $k=2$ and $i=3$.
Then we have the radii $r_i=\dfrac{1}{12}=r_{i+1}$, the gaps $g_i=\dfrac{1}{12}$, $g_{i+1}=\dfrac{1}{3}$, $g_{i+2}=\dfrac{1}{4}$, and $s=g_{i+1}=\dfrac{1}{3}$.
The pair 
$$\left\{g_i+2r_i,g_{i+2}+2r_{i+1}\right\}=\left\{\dfrac{1}{12}+\frac{2}{12},\dfrac{1}{4}+\frac{2}{12}\right\}$$ has the minimum $g=\dfrac{1}{4}$ and maximum $g'=\dfrac{5}{12}$.
Then $\eta_{2,3}[S](t)=\eta_{BR}$ has the following corners as expected in the second picture of Fig.~\ref{fig:2nd_density}: 
$$\begin{array}{l}
\left(\dfrac{s}{2},0\right)=\left(\dfrac{1}{6},0\right),\\ \\
\left(\dfrac{g+s}{2},g\right)
=\left(\dfrac{1}{2}\Big(\dfrac{1}{4}+\dfrac{1}{3}\Big),\dfrac{1}{4}\right)=\left(\dfrac{7}{24},\dfrac{1}{4}\right), \\ \\
\left(\dfrac{s+g'}{2},g\right)
=\left(\dfrac{1}{2}\Big(\dfrac{1}{3}+\dfrac{5}{12}\Big),\dfrac{1}{4}\right)=\left(\dfrac{3}{8},\dfrac{1}{4}\right), \\ \\
\Big(\dfrac{g+s+g'}{2},0\Big)
=\Big(\dfrac{1}{2}(\dfrac{1}{4}+\dfrac{1}{3}+\dfrac{5}{12}),0\Big)=\Big(\dfrac{1}{2},0\Big).
\end{array}$$

\noindent
\textbf{Case (RG)}.
For the trapezoid function $\eta_{RG}$ measuring the double intersections of the red and green intervals centered at $p_1=p_i$ and $p_2=p_{i+k-1}$, we set $k=2$ and $i=1$.
Then we have the radii $r_i=\dfrac{1}{12}$ and $r_{i+1}=0$, the gaps $g_i=\dfrac{1}{3}$, $g_{i+1}=\dfrac{1}{4}$, $g_{i+2}=\dfrac{1}{12}$, and $s=g_{i+1}=\dfrac{1}{4}$.
The pair 
$$\left\{g_i+2r_i,g_{i+2}+2r_{i+1}\right\}=\left\{\dfrac{1}{3}+\frac{2}{12},\dfrac{1}{12}+0\right\}$$ has the minimum $g=\dfrac{1}{12}$ and maximum $g'=\dfrac{1}{2}$.
Then $\eta_{2,1}[S](t)=\eta_{RG}$ has the following corners: 
$$\begin{array}{l}
\left(\dfrac{s}{2},0\right)=\left(\dfrac{1}{8},0\right),\\ \\
\left(\dfrac{g+s}{2},g\right)
=\left(\dfrac{1}{2}\Big(\dfrac{1}{12}+\dfrac{1}{4}\Big),\dfrac{1}{12}\right)=\left(\dfrac{1}{6},\dfrac{1}{12}\right), \\ \\
\left(\dfrac{s+g'}{2},g\right)
=\left(\dfrac{1}{2}\Big(\dfrac{1}{4}+\dfrac{1}{2}\Big),\dfrac{1}{12}\right)=\left(\dfrac{3}{8},\dfrac{1}{12}\right), \\ \\
\Big(\dfrac{g+s+g'}{2},0\Big)
=\Big(\dfrac{1}{2}(\dfrac{1}{12}+\dfrac{1}{4}+\dfrac{1}{2}),0\Big)=\Big(\dfrac{5}{12},0\Big).
\end{array}$$
 as expected in the third picture of Fig.~\ref{fig:2nd_density}.
\bs
\end{exa}

\section{Properties of new densities}
\label{sec:properties}

This section proves the periodicity of the sequence $\psi_k$ with respect to the index $k\geq 0$ in Theorem~\ref{thm:periodicity}, which was a bit unexpected from original Definition~\ref{dfn:densities}.
We start with the simpler example for the familiar 3-point sequence in Fig.~\ref{fig:growing_intervals}.

\begin{figure*}[h!]
\includegraphics[width=\linewidth]{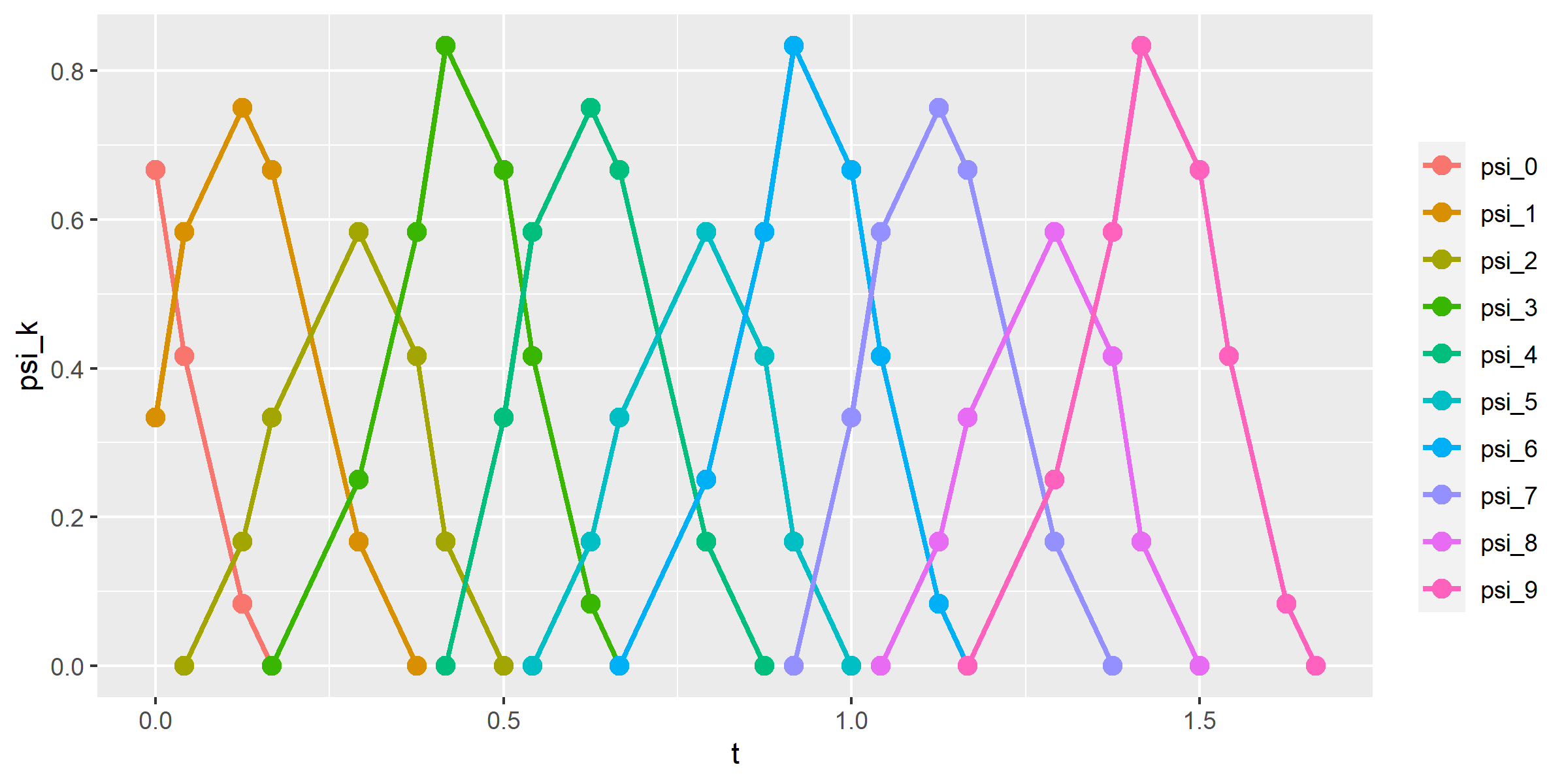}
\caption{The densities $\psi_k$, $k=0,\dots,9$ for the 1-period sequence $S$ whose points $0,\dfrac{1}{3},\dfrac{1}{2}$ have radii $\dfrac{1}{12},0,\dfrac{1}{12}$, respectively. 
The densities $\psi_0,\psi_1,\psi_2$ are described in Examples~\ref{exa:0-th_density},~\ref{exa:1st_density},~\ref{exa:2nd_density} and determine all other densities by periodicity in Theorem~\ref{thm:periodicity}. }
\label{fig:3-point_set_densities9}      
\end{figure*}

\begin{exa}[periodicity of $\psi_k$ in the index $k$]
\label{exa:periodicity}
Let the periodic sequence $S=\left\{0,\dfrac{1}{3},\dfrac{1}{2}\right\}+\Z$ have three points $p_1=0$, $p_2=\dfrac{1}{3}$, $p_3=\dfrac{1}{2}$ of radii $r_1=\dfrac{1}{12}$, $r_2=0$, $r_3=\dfrac{1}{12}$, respectively.
The initial intervals $L_1(0)=[-\frac{1}{12},\frac{1}{12}]$, $L_2(0)=[\frac{1}{3},\frac{1}{3}]$, $L_3(0)=[\frac{5}{12},\frac{7}{12}]$ have the 0-fold intersection measured by $\psi_0(0)=\dfrac{2}{3}$ and the 1-fold intersection measured by $\psi_1(0)=\dfrac{1}{3}$, see Fig.~\ref{fig:0-th_density} and~\ref{fig:1st_density}.
\medskip

By the time $t=\dfrac{1}{2}$ the initial intervals will grow to 
$L_1(\frac{1}{2})=[-\frac{7}{12},\frac{7}{12}]$, 
$L_2(\frac{1}{2})=[-\frac{1}{6},\frac{5}{6}]$, 
$L_3(\frac{1}{2})=[-\frac{1}{12},\frac{13}{12}]$.
The grown intervals at the radius $t=\dfrac{1}{2}$ have the 3-fold intersection $[-\frac{1}{12},\frac{7}{12}]$ of the length $\psi_3(\frac{1}{2})=\dfrac{2}{3}$, which coincides with $\psi_0(0)=\dfrac{2}{3}$.
\medskip

With the extra interval $L_4(\frac{1}{2})=[\frac{5}{12},\frac{19}{12}]$ centered at $p_4=1$, the 4-fold intersection is $L_1\cap L_2\cap L_3\cap L_4=[\frac{5}{12},\frac{7}{12}]$.
With the extra interval $L_{5}(\frac{1}{2})=[\frac{5}{6},\frac{11}{6}]$ centered at $p_{5}=\dfrac{4}{3}$, the 4-fold intersection $L_2\cap L_3\cap L_4\cap L_5$ is the single point $\dfrac{5}{6}$.
With the extra interval $L_{6}(\frac{1}{2})=[\frac{11}{12},\frac{13}{12}]$ centered at $p_6=\dfrac{3}{2}$, the 4-fold intersection is 
$L_3\cap L_4\cap L_5\cap L_6=[\frac{11}{12},\frac{13}{12}]$.
Hence the total length of the 4-fold intersection at $t=\dfrac{1}{2}$ is $\psi_4(\frac{1}{2})=\frac{1}{3}$, which coincides with $\psi_1(0)=\dfrac{1}{3}$.
\medskip

For the larger $t=1$, the six grown intervals
$$\begin{array}{ll}
L_1(1)=\left[-\dfrac{13}{12},\dfrac{13}{12}\right], & 
L_2(1)=\left[-\dfrac{2}{3},\dfrac{4}{3}\right], \\ \\
L_3(1)=\left[-\dfrac{7}{12},\dfrac{19}{12}\right], & 
L_4(1)=\left[-\dfrac{1}{12},\dfrac{25}{12}\right], \\ \\
L_5(1)=\left[\dfrac{1}{3},\dfrac{7}{3}\right], &
L_6(1)=\left[\dfrac{5}{12},\dfrac{31}{12}\right]
\end{array}$$

have the 6-fold intersection 
$\left[\dfrac{5}{12},\dfrac{13}{12}\right]$ of length $\psi_6(1)=\dfrac{2}{3}$ coinciding with $\psi_0(0)=\psi_3(\frac{1}{2})=\dfrac{2}{3}$.
\bs 
\end{exa}

Corollary~\ref{thm:periodicity} proves that the coincidences in Example~\ref{exa:periodicity} are not accidental. The periodicity of $\psi_k$ with respect to $k$ is illustrated by Fig.~\ref{fig:3-point_set_densities9}.

\begin{thm}[periodicity of $\psi_k$ in the index $k$]
\label{thm:periodicity}
The density functions $\psi_k[S]$ of a periodic sequence $S=\{p_1,\dots,p_m\}+\Z$ consist of disjoint intervals with centers $0\leq p_1<\dots<p_m<1$ and radii $r_1,\dots,r_m\geq 0$, respectively, satisfy the \emph{periodicity} 
$\psi_{k+m}(t+\frac{1}{2})=\psi_{k}(t)$ for any $k\geq 0$ and $t\geq 0$.
\bs
\end{thm}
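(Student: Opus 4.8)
The plan is to exploit the one-dimensional structure directly: when all intervals are expanded by the same radius $t$, adding $\frac{1}{2}$ to $t$ increases every interval's half-length by exactly $\frac{1}{2}$, which is exactly one period. I would first establish the pointwise statement, namely that for a fixed $q\in\R$, the number of grown intervals $\bar B(p;r(p)+t+\frac{1}{2})$ over $p\in S$ that contain $q$ exceeds the number of grown intervals $\bar B(p;r(p)+t)$ containing $q$ by exactly $m$. The reason is that $q\in[p-r(p)-t-\frac{1}{2},\,p+r(p)+t+\frac{1}{2}]$ if and only if $q\in[p-r(p)-t,\,p+r(p)+t]$ or $p$ lies in one of two half-open "fringe" intervals of total length $1$ adjacent to the old interval; since $S$ has exactly $m$ points per unit length and these fringes partition cleanly, the count goes up by $m$. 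I would phrase this carefully using the left and right fringe intervals $(p+r(p)+t,\;p+r(p)+t+\frac{1}{2}]$ and $[p-r(p)-t-\frac{1}{2},\;p-r(p)-t)$ and the fact that, as $p$ ranges over the periodic set $S$, exactly $m$ of the shifted centers land in any half-open window of length $1$.

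Second, I would translate this pointwise count into the density statement. By Definition~\ref{dfn:densities}, $\psi_k[S](t)=\vol(U_k(t))/\vol(U)$ where $U_k(t)\subseteq U=[0,1]$ is the set of $q$ covered by exactly $k$ grown intervals. The pointwise identity says that the covering-multiplicity function $q\mapsto \#\{p\in S : q\in \bar B(p;r(p)+t)\}$ increases by the constant $m$ when $t$ is replaced by $t+\frac{1}{2}$. Hence the level set $\{q : \text{multiplicity} = k+m \text{ at radius } t+\tfrac12\}$ equals the level set $\{q : \text{multiplicity} = k \text{ at radius } t\}$, so $U_{k+m}(t+\frac12)=U_k(t)$ as subsets of $[0,1]$, and taking fractional lengths gives $\psi_{k+m}(t+\frac12)=\psi_k(t)$.

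The main obstacle is making the counting argument airtight at the boundary cases: the intervals are closed, so I must be careful about endpoints (a point $q$ equal to an endpoint $p+r(p)+t$ is counted), and I must verify that the two fringe intervals, taken half-open in the right way, genuinely contribute exactly $m$ new centers without double-counting or missing a center when $t$ is at a critical value where intervals touch. The clean way to handle this is to fix $q$ and observe that $q$ is covered by $\bar B(p;r(p)+s)$ precisely when the center $p$ lies in the closed interval $[q-s-r(p),\,q+s+r(p)]$; here all radii are the fixed $r(p)$, so this is best organized by splitting $S$ according to which residue class mod $1$ each point belongs to. For each of the $m$ motif points, increasing $s$ from $t$ to $t+\frac12$ enlarges the admissible window for its translates by total length $1$, capturing exactly one more translate, and summing over the $m$ motif points gives the increment $m$. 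I would also note that $\psi_k(t)=0$ for all sufficiently large $k$ at fixed $t$ fails in general (multiplicities are bounded for fixed $t$ but grow with $t$), so the periodicity is genuinely a shift in both arguments, consistent with Example~\ref{exa:periodicity}. A short remark would record that, combined with Theorems~\ref{thm:0-th_density},~\ref{thm:1st_density},~\ref{thm:k-th_density}, this reduces the computation of all $\psi_k$ to the finitely many cases $k=0,1,\dots,m-1$, which is the content of Corollary~\ref{cor:computation}.
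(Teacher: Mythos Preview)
Your pointwise claim---that for every (or almost every) $q$ the covering multiplicity $M(q,t+\tfrac12)$ exceeds $M(q,t)$ by exactly $m$---is false, and with it the assertion $U_{k+m}(t+\tfrac12)=U_k(t)$ as sets. Take $S=\Z$ (so $m=1$, all radii $0$) and $t=0.3$: at $q=0.1$ one has $M(0.1,0.3)=1=M(0.1,0.8)$, increment $0$; at $q=0.4$ one has $M(0.4,0.3)=0$ while $M(0.4,0.8)=2$, increment $2$. The theorem still holds because $\psi_0(0.3)=0.4=\psi_1(0.8)$, but the level sets $U_0(0.3)=(0.3,0.7)$ and $U_1(0.8)=[0,0.2)\cup(0.8,1]$ are genuinely different subsets of $[0,1]$. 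The flaw is exactly where you suspected: enlarging a closed interval by $\tfrac12$ on \emph{each} side produces two fringe half-intervals of length $\tfrac12$ that together need not contain exactly one lattice point. A half-open interval of length $1$ does contain exactly one integer, but your two fringes do not concatenate into such an interval.

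Your approach is salvageable with one extra idea: precompose with the translation $q\mapsto q+\tfrac12$. Growing the translate $S+\tfrac12$ symmetrically by $t+\tfrac12$ gives the same family of intervals as growing $S$ \emph{asymmetrically} by $1$ on the right only, namely $[p-r(p)-t,\,p+r(p)+t+1]$; for this asymmetric growth the fringe at each motif point is a single half-open interval of length $1$, which captures exactly one more translate, so the multiplicity genuinely increases by $m$ everywhere. This yields $U_{k+m}[S](t+\tfrac12)=U_k[S](t)-\tfrac12\pmod 1$, a translate rather than an equality of sets, and taking lengths gives the identity. The paper's proof reaches the same conclusion by a different device---collapsing a unit-length subinterval $[x,x+1]$ to a point---which likewise shows the $(k+m)$-fold region at radius $t+\tfrac12$ is a translate (not a copy) of the $k$-fold region at radius $t$.
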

\begin{proof}
Since the initial intervals are disjoint, for $k\geq 0$, any $(k+m)$-fold intersection involves $k+m$ successive intervals $L_i(t),\dots,L_{i+k+m-1}(t)$ centered around the points of $S$.
Then we can find an interval $[x,x+1]$ covering exactly $m$ of these initial intervals of $S$. 
\medskip

By collapsing $[x,x+1]$ to the point $x$, any $(k+m)$-fold intersection of $k+m$ intervals grown by a radius $r\geq\dfrac{1}{2}$ becomes a $k$-fold intersection of $k$ intervals grown by $t=r-\dfrac{1}{2}$.
Both $k$-fold and $(k+m)$-fold intersections within any unit cell have the same fractional length, so $\psi_{k+m}(t+\frac{1}{2})=\psi_{k}(t)$ for any $t\geq 0$.
\end{proof}

The symmetry $\psi_{m-k}(\frac{1}{2}-t)=\psi_k(t)$ for $k=0,\dots,[\frac{m}{2}]$, and $t\in[0,\frac{1}{2}]$ from \cite[Theorem~8]{anosova2022density} no longer holds for points with different radii.
For example, $\psi_1(t)\neq \psi_2(\frac{1}{2}-t)$ for the periodic sequence $S=\left\{0,\dfrac{1}{3},\dfrac{1}{2}\right\}+\Z$, see Fig.~\ref{fig:1st_density},~\ref{fig:2nd_density}.
If all points have the same radius $r$, \cite[Theorem~8]{anosova2022density} implies the symmetry after replacing $t$ by $t+2r$.

\medskip

The main results of \cite{anosova2022density} implied that all density functions cannot distinguish the non-isometric sequences 
$S_{15} = \{0,1,3,4,5,7,9,10,12\}+15\Z$ and $Q_{15} = \{0,1,3,4,6,8,9,12,14\}+15\Z$ of points with zero radii.
Example~\ref{exa:S15+Q15} shows that the densities for sequences with non-zero radii are strictly stronger and distinguish the sequences $S_{15}\not\cong Q_{15}$.

\begin{exa}[$\psi_k$ for $S_{15},Q_{15}$ with neighbor radii]
\label{exa:S15+Q15}
For any point $p$ in a periodic sequence $S\subset\R$, define its \emph{neighbor} radius as the half-distance to a closest neighbor of $p$ within the sequence $S$.
\smallskip

This choice of radii respects the isometry in the sense that periodic sequences $S,Q$ with zero-sized radii are isometric if and only if $S,Q$ with neighbor radii are isometric.
Fig.~\ref{fig:S5+Q15} shows that the densities $\psi_k$ for $k\geq 2$ distinguish the non-isometric sequences $S_{15}$ and $Q_{15}$ scaled down by factor 15 to the unit cell $[0,1]$, see 
Example~\ref{exa:SQ15}.
\bs
\end{exa}

\begin{figure*}[h!]
\includegraphics[width=\linewidth]{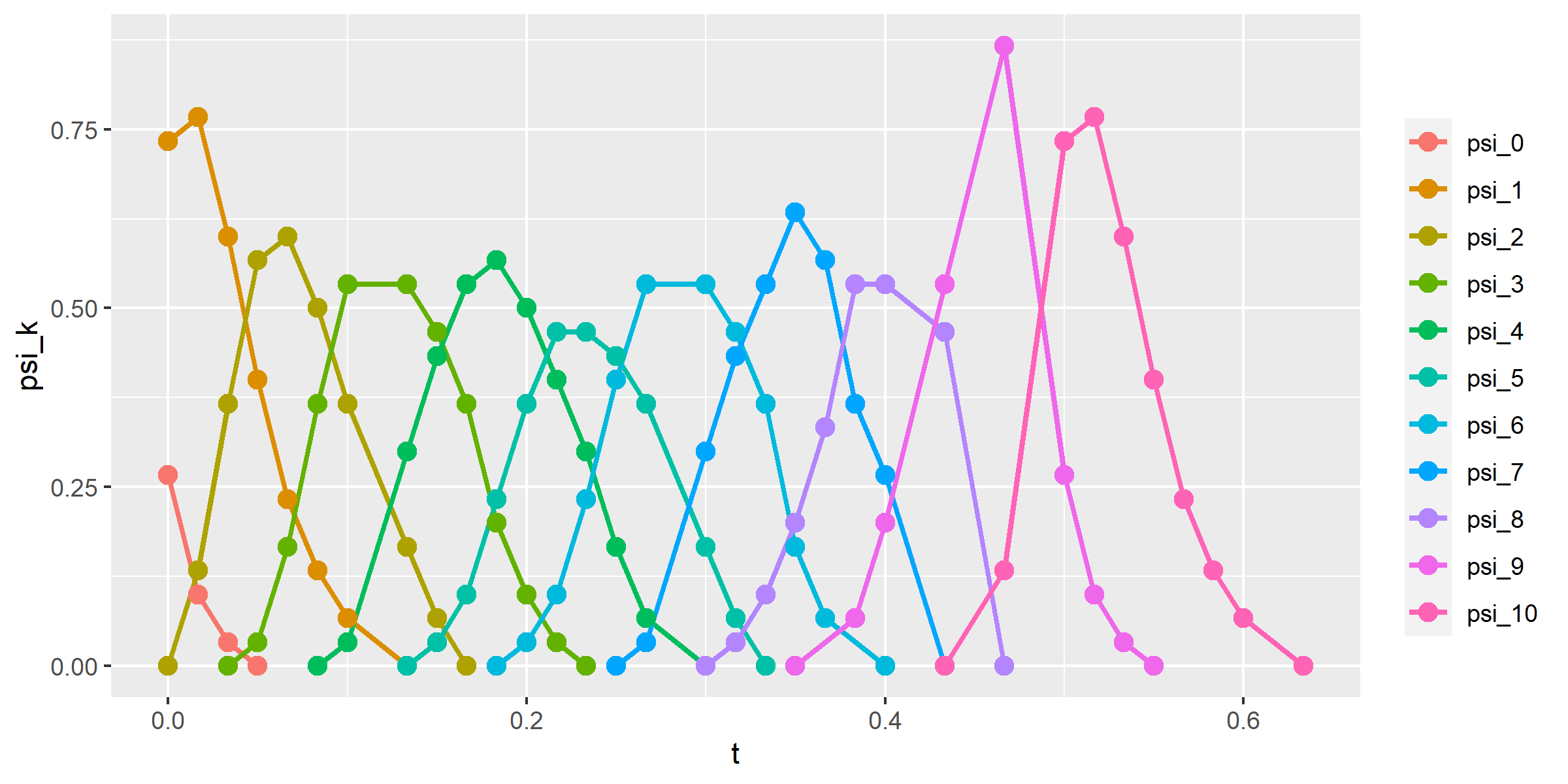}
\includegraphics[width=\linewidth]{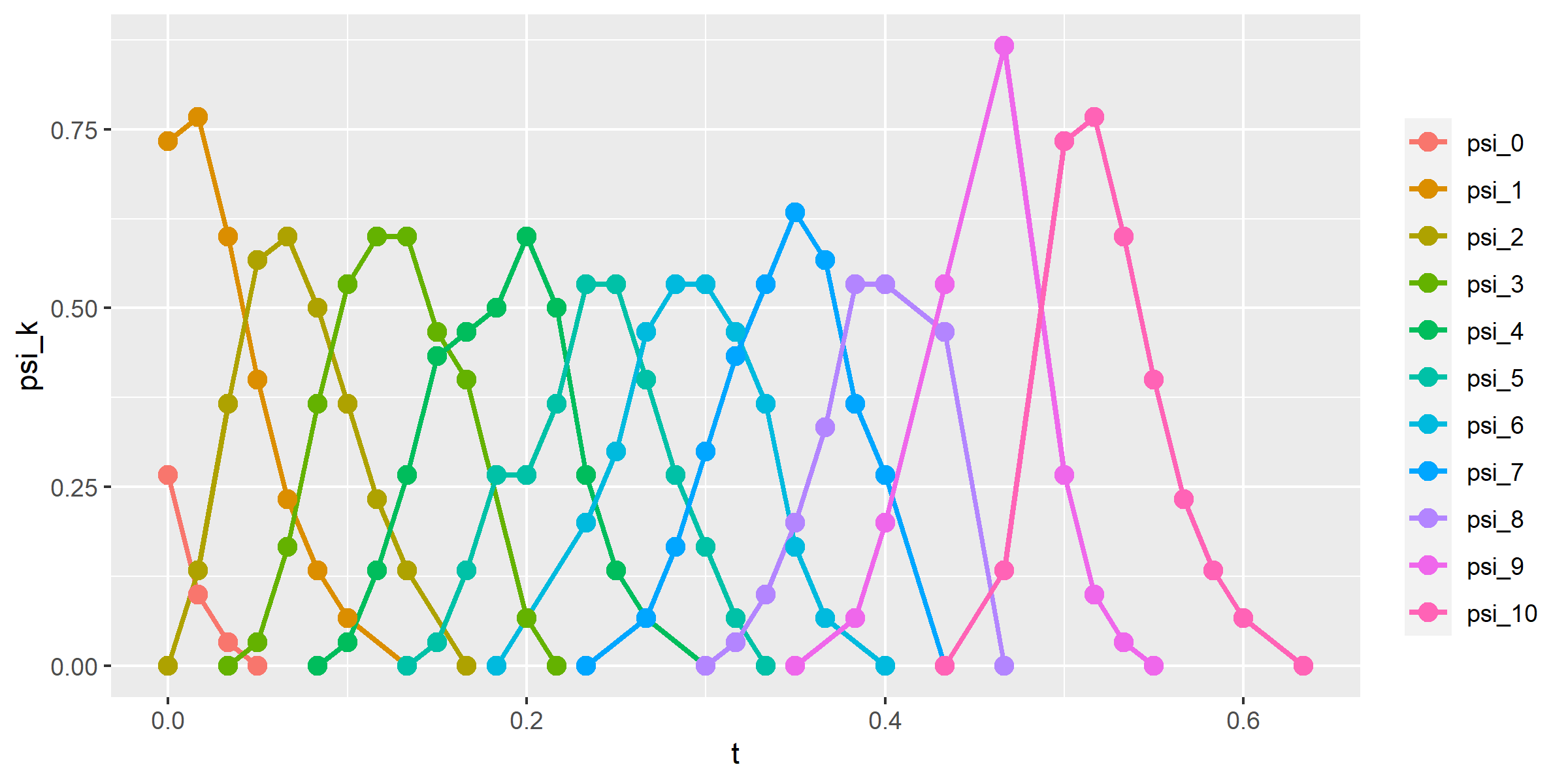}
\caption{The densities $\psi_k$, $k=0,\dots,10$, distinguish (already for $k\geq 2$) the sequences (scaled down by period $15$) $S_{15} = \{0,1,3,4,5,7,9,10,12\}+15\Z$ (\textbf{top}) and $Q_{15} = \{0,1,3,4,6,8,9,12,14\}+15\Z$ (\textbf{bottom}), where the radius $r_i$ of any point is the half-distance to its closest neighbor.
These sequences with zero radii have identical $\psi_k$ for all $k$, see \cite[Example~10]{anosova2022density}.}
\label{fig:S5+Q15}      
\end{figure*}

\begin{cor}[computation of $\psi_k(t)$]
\label{cor:computation}
Let $S,Q\subset\R$ be periodic sequences with at most $m$ motif points.
For $k\geq 1$, one can draw the graph of the $k$-th density function $\psi_k[S]$ in time $O(m^2)$.
One can check in time $O(m^3)$ if 
$\Psi[S]=\Psi[Q]$.
\bs
\end{cor}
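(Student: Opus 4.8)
The proof of Corollary~\ref{cor:computation} is essentially bookkeeping on top of Theorems~\ref{thm:0-th_density}--\ref{thm:periodicity}, so the plan is to (i) reduce everything to a finite amount of data, (ii) count the cost of producing that data, and (iii) count the cost of comparing two such data sets. First I would normalize $S$ to period~$1$, sort the $m$ motif points $p_1<\dots<p_m$ (cost $O(m\log m)$), and compute all $m$ gaps $g_i=(p_i-r_i)-(p_{i-1}+r_{i-1})$ in time $O(m)$. By Theorem~\ref{thm:periodicity} the whole infinite family $\{\psi_k\}_{k\ge 0}$ is determined by $\psi_0,\dots,\psi_{m-1}$ (equivalently $\psi_0,\dots,\psi_{m}$), so only finitely many densities must ever be handled.

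\textbf{Drawing one $\psi_k$.} For a fixed $k\ge 1$, Theorem~\ref{thm:1st_density} (case $k=1$) or Theorem~\ref{thm:k-th_density} (case $k\ge 2$) writes $\psi_k=\sum_{i=1}^m\eta_{k,i}$, where each trapezoid $\eta_{k,i}$ has at most four corners, each a simple arithmetic expression in the already-computed gaps, radii, and the partial sum $s=\sum_{j=i+1}^{i+k-1}g_j+2\sum_{j=i+1}^{i+k-2}r_j$. Maintaining a sliding window, all the sums $s$ for $i=1,\dots,m$ are obtained in $O(m)$ total time, so the $m$ trapezoids for a single $k$ cost $O(m)$ to list as $O(m)$ corner points. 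To draw the graph of $\psi_k$ I would merge these $O(m)$ breakpoints into sorted order ($O(m\log m)$) and accumulate the piecewise-linear sum in one left-to-right sweep, tracking the current slope and value; this produces the graph of $\psi_k$ in $O(m\log m)$, which is within the claimed $O(m^2)$ (indeed better, but the statement only asks for an upper bound). One should remark that for $k\ge 2$ only $k$ consecutive intervals can meet, as already argued in the proof of Theorem~\ref{thm:k-th_density}, so no contributions are missed.

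\textbf{Deciding $\Psi[S]=\Psi[Q]$.} By Theorem~\ref{thm:periodicity}, $\Psi[S]=\Psi[Q]$ holds if and only if $\psi_k[S]=\psi_k[Q]$ for $k=0,1,\dots,m$ (after checking the motif sizes agree, and noting $\psi_0$ forces the period/total length to match). Each equality $\psi_k[S]=\psi_k[Q]$ is an equality of explicit piecewise-linear functions with $O(m)$ breakpoints, testable in $O(m\log m)$ after the sweep above; doing this for $O(m)$ values of $k$ gives $O(m^2\log m)$, and, absorbing the logarithm or replacing the per-$k$ merge by a radix/bucket sort on the (rational, bounded-denominator) breakpoints, the bound $O(m^3)$ in the statement is comfortably met. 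Alternatively, and more in the spirit of the three theorems, I would compare the \emph{combinatorial certificates}: Theorem~\ref{thm:0-th_density} shows $\psi_0$ is the multiset of gaps; Theorem~\ref{thm:1st_density} shows $\psi_1$ is the multiset of unordered pairs $\{g_i,g_{i+1}\}$; Theorem~\ref{thm:k-th_density} shows $\psi_k$ is the multiset of tuples $(g,s,g')$ around each $i$. Forming and sorting these multisets for all $k\le m$ touches $O(m)$ data per $k$, i.e. $O(m^2)$ items total, and sorting/comparing them is $O(m^2\log m)\subseteq O(m^3)$.

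\textbf{Main obstacle.} Nothing here is deep; the only place demanding care is making sure the reduction of the infinite sequence $\Psi$ to finitely many indices is airtight and that no degenerate coincidences (repeated gaps collapsing corners, a radius forcing $g=g'$, wrap-around indices modulo $m$) break the corner formulas — but each theorem already states the collapsed-corner convention, so one only needs to observe that collapsing corners does not change the drawn graph or the comparison. I would therefore phrase the obstacle as purely presentational: bounding, once and for all, the number of distinct $k$ one must inspect (namely $m$, via Theorem~\ref{thm:periodicity}) and the number of breakpoints per $\psi_k$ (namely $O(m)$, via Theorems~\ref{thm:1st_density} and~\ref{thm:k-th_density}), after which the stated time bounds follow by standard sorting-and-sweeping.
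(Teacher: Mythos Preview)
Your main line of argument is correct and is essentially the paper's own proof: reduce to finitely many indices via Theorem~\ref{thm:periodicity}, list the $O(m)$ trapezoid corners per $k$ from Theorems~\ref{thm:1st_density} and~\ref{thm:k-th_density}, and compare the resulting piecewise-linear functions; your sort-and-sweep per $k$ is in fact a bit tighter ($O(m\log m)$) than the paper's brute-force evaluation of all $m$ trapezoids at all $O(m)$ critical radii ($O(m^2)$), but both sit inside the stated bounds.

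One caveat on your ``alternative'' via combinatorial certificates: Theorems~\ref{thm:1st_density} and~\ref{thm:k-th_density} only assert that the multiset of tuples \emph{determines} $\psi_k$, not that $\psi_k$ determines the multiset. Two different multisets of trapezoid parameters can in principle sum to the same piecewise-linear function, so comparing the multisets is merely a sufficient test for $\psi_k[S]=\psi_k[Q]$, not a decision procedure. Stick with your primary approach of comparing the actual graphs at the merged breakpoints; that one is sound.
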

\begin{proof}
To draw the graph of $\psi_k[S]$ or evaluate the $k$-th density function $\psi_k[S](t)$ at any radius $t$, we first use the periodicity from Theorem~\ref{thm:periodicity} to reduce $k$ to the range $0,1,\dots,m$.
In time $O(m\log m)$ we put the points from a unit cell $U$ (scaled to $[0,1]$ for convenience) in the increasing (cyclic) order $p_1,\dots,p_m$.
In time $O(m)$ we compute the gaps $g_i=(p_{i}-r_{i})-(p_{i-1}+r_{i-1})$ between successive intervals.
\smallskip

For $k=0$, we put the gaps in the increasing order $g_{[1]}\leq\dots\leq g_{[m]}$ in time $O(m\log m)$.
By Theorem~\ref{thm:0-th_density} in time $O(m^2)$, we write down the $O(m)$ corner points whose horizontal coordinates are the critical radii where $\psi_0(t)$ can change its gradient. 
\medskip

We evaluate $\psi_0$ at every critical radius $t$ by summing up the values of $m$ trapezoid functions at $t$, which needs $O(m^2)$ time.
It remains to plot the points at all $O(m)$ critical radii $t$ and connect the successive points by straight lines, so the total time is $O(m^2)$.
\medskip

For any larger fixed index $k=1,\dots,m$, in time $O(m^2)$ we write down all $O(m)$ corner points from Theorems~\ref{thm:1st_density} and~\ref{thm:k-th_density}, which leads to the graph of $\psi_k(t)$ similarly to the above argument for $k=0$.
\medskip

To decide if the infinite sequences of density functions coincide: $\Psi[S]=\Psi[Q]$, by Theorem~\ref{thm:periodicity} it suffices to check only if $O(m)$ density functions coincide: $\psi_k[S](t)=\psi_k[Q](t)$ for $k=0,1,\dots,[\frac{m}{2}]$.
\medskip

To check if two piecewise linear functions coincide, it remains to compare their values at all $O(m)$ critical radii $t$ from the corner points in Theorems~\ref{thm:0-th_density},~\ref{thm:1st_density},~\ref{thm:k-th_density}.
Since these values were found in time $O(m^2)$ above, the total time for $k=0,1,\dots,[\frac{m}{2}]$ is $O(m^3)$. 
\end{proof}

All previous examples show densities with a single local maximum.
However, the new R code \cite{anosova2023R} helped us discover the opposite examples.

\begin{figure}[h!]
\includegraphics[width=\linewidth]{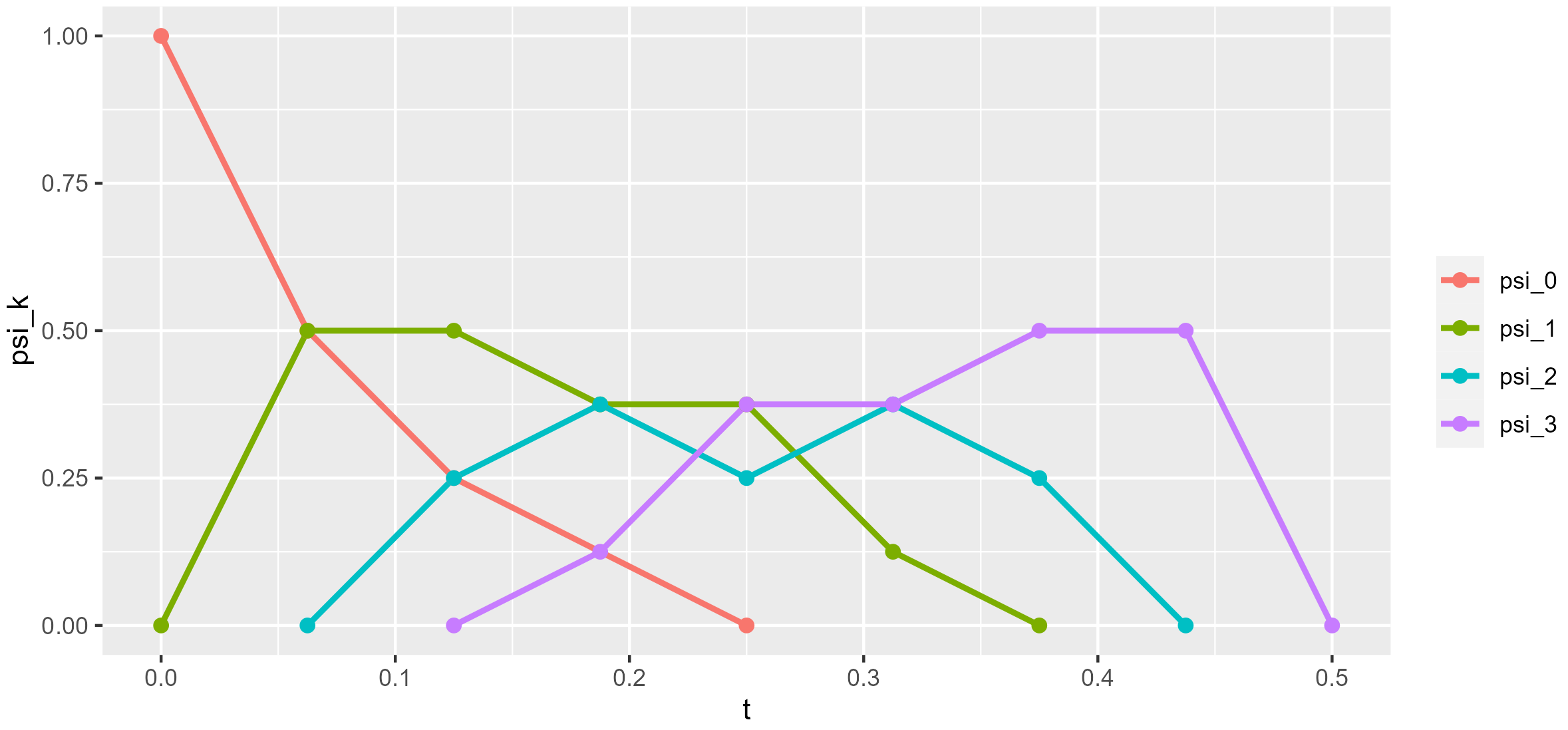}
\caption{For the periodic sequence $S=\left\{0,\dfrac{1}{8},\dfrac{1}{4},\dfrac{3}{4}\right\}+\Z$ whose all points have radii 0, the 2nd density $\psi_2[S](t)$ has the local minimum at $t=\dfrac{1}{4}$ between two local maxima.
}
\label{fig:set0_0125_025_075_densities3}      
\end{figure}

\begin{exa}[densities with multiple maxima]
\label{exa:multimax}
Fig.~\ref{fig:set0_0125_025_075_densities3} shows a simple 4-point sequence $S$ whose 2nd density $\psi_2[S]$ has two local maxima.
Fig.~\ref{fig:set0_powers3_psi_2_eta} and~\ref{fig:set0_powers2_psi_3_eta} show more complicated sequences whose density functions have more than two maxima.
\bs
\end{exa}

\begin{figure}[h!]
\includegraphics[width=\linewidth]{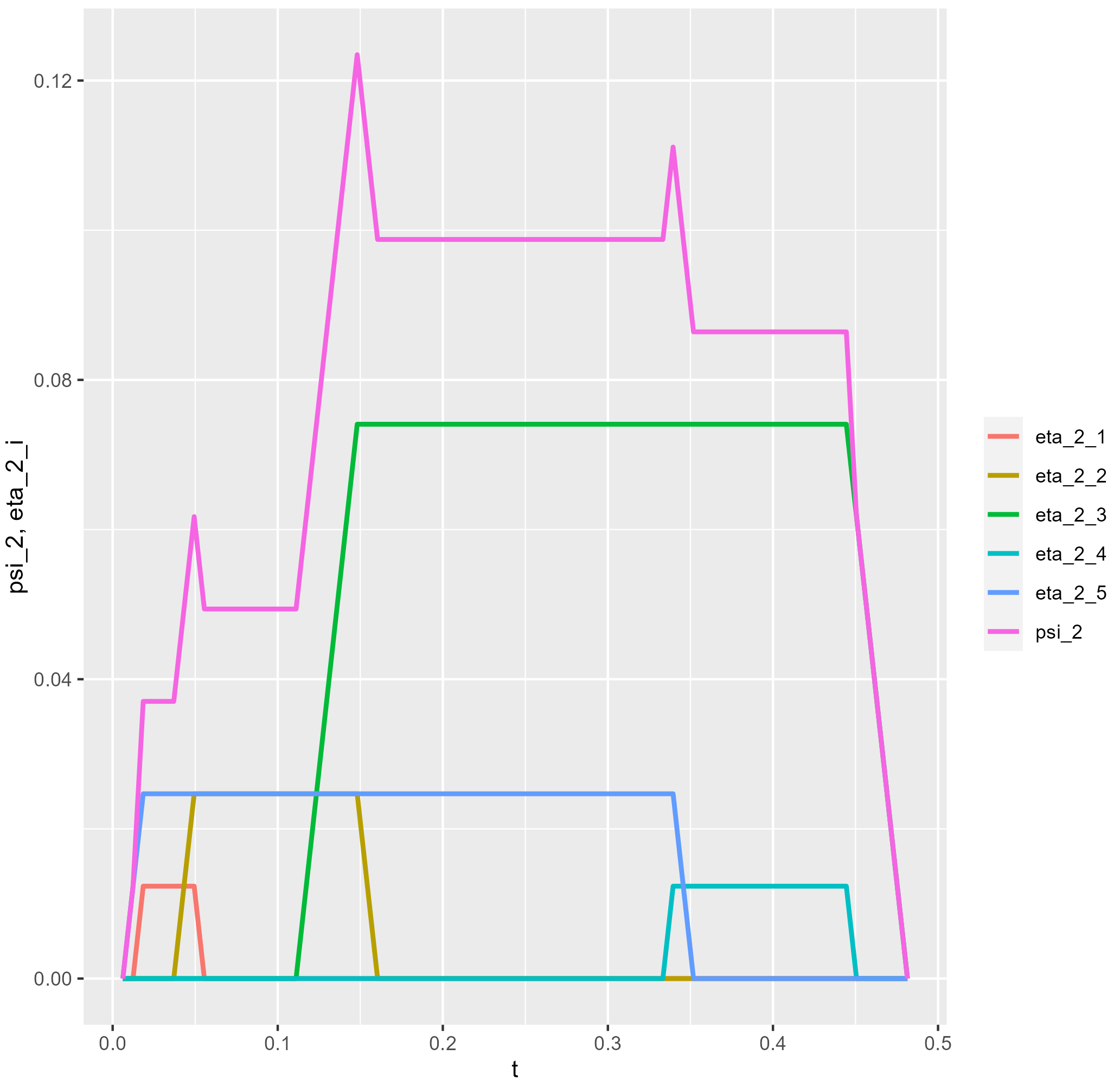}
\caption{For the sequence $S=\Big\{0,\dfrac{1}{81},\dfrac{1}{27},\dfrac{1}{9},\dfrac{1}{3}\Big\}+\Z$ whose all points have radii 0, $\psi_2[S]$ 
equal to the sum of the shown five trapezoid functions has three maxima.
}
\label{fig:set0_powers3_psi_2_eta}      
\end{figure}

\begin{figure}[h!]
\includegraphics[width=\linewidth]{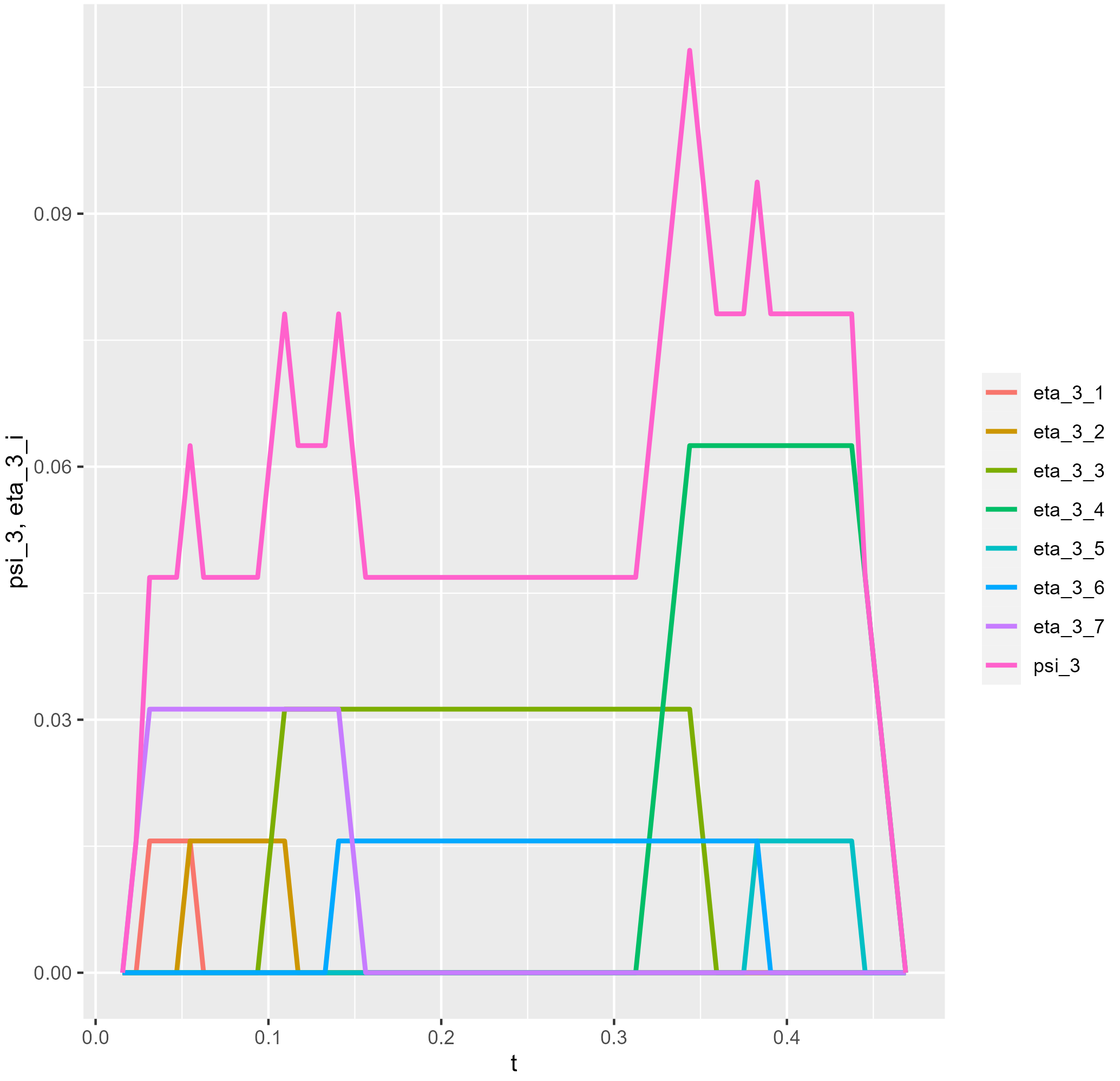}
\caption{For the sequence $S=\Big\{0,\dfrac{1}{64},\dfrac{1}{16},\dfrac{1}{8},\dfrac{1}{4},\dfrac{3}{4}\Big\}+\Z$ whose all points have radii 0, 
$\psi_3[S]$ has 5 local maxima.
}
\label{fig:set0_powers2_psi_3_eta}      
\end{figure}

\section{Conclusions and future work}
\label{sec:conclusions}

In comparison with the past work \cite{anosova2022density}, the key contributions of this paper are the following.
\medskip

\noindent
$\bullet$
Definition~\ref{dfn:densities} extends density functions $\psi_k$ to any periodic sets of points with radii $r_i\geq 0$.
\medskip

\noindent
$\bullet$
Theorems~\ref{thm:0-th_density},~\ref{thm:1st_density},~\ref{thm:k-th_density} explicitly  describe all $\psi_k$ for any periodic sequence $S$ of points with radii.
\medskip

\noindent
$\bullet$
The descriptions of $\psi_k$ allowed us to justify the periodicity of $\psi_k$ in Theorem~\ref{thm:periodicity} and a quadratic algorithm computing any $\psi_k$ in Corollary~\ref{cor:computation}.
\medskip

\noindent
$\bullet$
The code \cite{anosova2023R} helped us distinguish $S_{15}\not\cong Q_{15}$ in Example~\ref{exa:S15+Q15} and find sequences whose densities have multiple local maxima in Example~\ref{exa:multimax}. 
\medskip

Here are the open problems for future work.
\medskip

\noindent
$\bullet$
Verify if density functions $\psi_k[S](t)$ for small values of $k$ distinguish all non-isometric periodic point sets $S\subset\R^n$ at least with radii 0.
\medskip

\noindent
$\bullet$
Characterize the periodic sequences $S\subset\R$ whose all density functions $\psi_k$ for $k\geq 1$ have a unique local maximum, not as in Example~\ref{exa:multimax}.
\medskip

\noindent
$\bullet$
Similar to Theorems~\ref{thm:0-th_density},~\ref{thm:1st_density},~\ref{thm:k-th_density}, analytically describe the density function $\psi_k[S]$ for  periodic point sets $S\subset\R^n$ in higher dimensions $n>1$.
\medskip

\noindent
This research was supported by the grants of the UK Engineering Physical Sciences Research Council (EP/R018472/1, EP/X018474/1)  
and the Royal Academy of Engineering Industrial Fellowship (IF2122/186) of the last author.
We thank all reviewers 
for their time and helpful advice.

\bibliographystyle{splncs04}
\bibliography{densities-sequences-intervals}


\end{document}